\definecolor{mygray}{rgb}{0.8,0.8,0.8}
\newtheorem{theorem}{Theorem}[section]
\newtheorem{lemma}[theorem]{Lemma}
\begin{document}

\title{Scheduling Chained Multiprocessor Tasks onto Large Multiprocessor System}

\author{T. K. Agrawal, R, Sharma, M. Ghose, A. Sahu \ \\ Department of Comp. Sc \& Engg., IIT Guwahati, \\
Guwahati, Assam, India, PIN-781039. email:\{tarun.k, r.sharma, g.manojit, asahu\}@iitg.ernet.in\\[-2pt]}
\date{}

\maketitle

\begin{abstract}
In this paper, we proposed an effective approach for scheduling of multiprocessor unit time tasks with chain precedence on to large multiprocessor system. The proposed longest chain maximum processor scheduling algorithm is proved to be optimal for uniform chains and monotone (non-increasing/non-decreasing) chains for both splitable and non-splitable multiprocessor unit time tasks chain. Scheduling arbitrary chains of non-splitable multiprocessor unit time tasks is proved to be NP-complete problem. But scheduling arbitrary chains of splitable multiprocessor unit time tasks is still an open problem to be proved whether it is NP-complete or can be solved in polynomial time. We have used three heuristics (a) maximum criticality first, (b) longest chain maximum criticality first and (c) longest chain maximum processor first for scheduling of arbitrary chains. Also compared performance of all three scheduling heuristics and found out that the proposed longest chain maximum processor first performs better in most of the cases. 
\end{abstract}

\section{Introduction and Motivation} 

Modern computing system contains multiple cores which enable many applications or tasks to execute concurrently. Chip multiprocessor exploits the increasing device density in a single chip, so now a days we expects two or three order number of cores on a chip. Also most of the applications by nature they are parallel and their run time characteristics exhibit time varying phase behavior \cite{Sherwood20, Cho21}. Applications impose different performance metric values in different phases. During a phase an application have same value of performance metrics. In  \cite{Ban22,SureshIndicon}, Banerjee $et~al.$ used instruction per cycle (IPC), instruction level parallelism (ILP) and L1 cache hits to detect phases of applications execution time. As different phases of application have different parallelism and memory access characteristics scheduling algorithm should consider this fact to improve the performance, otherwise system will be underutilized. In this work, we have considered the parallelism characteristics of different phases of an application to schedule efficiently. Without loss of generality, we can consider an application consist of a sequence of task or phase, where each task or phase exhibit different degree of parallelism. In this text, we have used task and phase interchangeably. Tasks that require more than one processor at a time are called multiprocessor tasks. Tasks requiring $k$ processors at a time are called $k$-width tasks. 

In this paper we are concerned about scheduling of $N$ multi-phase applications (chain of multiprocessor tasks, each of which have arbitrary number of phases or tasks) onto $M$ processor system. Each phase or task has two characteristics: one is execution time of phase and other is number of processors required for execution of that phase. A phase of an application can be scheduled on any subset of processors of given size (given by phase). So this kind of application can be represented as collection of multiprocessor task with chain precedence constraints. Interchangeably we say this as chain of multiprocessor task throughout this paper. If execution time of a multiprocessor task is 1 then we say it is a  multiprocessor unit time task and for multiprocessor unit time  task we don't use pre-exemption. A task is splitable multiprocessor unit time task with k-width, means this task may be run in splitable in term of processor. Suppose a task require $1$ unit time $p$ processor then this task may execute $1$ unit on $d$ processor in current time slot and remaining $p-d$ unit in any other time slots. If $d$ is integer then task  is unit splitable otherwise continuous splitable task. In this paper, we have tried to link theoretical aspect of multiprocessor scheduling and efficient scheduling multi-phase application on to multiprocessors to cope up with modern days execution environment scenario (parallel multi-phase application on large multi-processor system). 

Non-preemptive scheduling of $N$ independent tasks (uni-processor task) on $M \geq 3$ processors is NP-complete. Similarly, the problem of multiprocessor tasks scheduling is NP-Hard for non-preemptive case and independent tasks of arbitrary execution time \cite{Blazewicz13}. If the execution time of each phase is restricted to unit time then the problem is polynomially solvable for arbitrary but fixed number of processor requirement case \cite{Blazewicz13}.  When preemption is allowed we can solve independent tasks of arbitrary execution time with arbitrary number of processors requirements in polynomial time \cite{Blazewicz13}. Most of the cases, if we add precedence constraint between tasks difficulty increases. Scheduling multiprocessor applications which have precedence constraint as directed acyclic graphs is NP-Complete. Many heuristics are proposed to schedule them. We are considering precedence constraint chain in this paper which is simplest precedence constraint. The problem scheduling multiprocessor task with chain precedence is also strongly NP-Hard for than two processors and non pre-emptable tasks of arbitrary execution time \cite{Blazewicz13}. If execution time is restricted to unit for chained applications then also the problem is strongly NP-Hard for three processors with arbitrary processors requirement of multiprocessor task \cite{Blazewicz13}. 

In this paper we are considering three types of chains of multiprocessor unit time task. These are (a) uniform chain, (b) monotonically increasing or decreasing chains and (c) arbitrary chains. We are considering two types of multiprocessor tasks also. These are (a) splitable tasks, and (b) non-splitable tasks. Non-splitable task require all required processors simultaneously at a time means $k$-width task can not be scheduled on $<$ $k$ processors and splitable task can be processed by allocating partial number of processors at different times means $k$-width task can be scheduled on processors less than $k$ at a time and remaining can be given on next time. Our result is stronger result as compared to Blazewicz $et~al.$ \cite{5}, where they consider scheduling of tasks requiring an arbitrary number of processors between $1$ and $k$, where $k$ is fixed integer and unit time processing. Also we have considered splitable and non-splitable version of the problem.

Rest of the paper is organized as follows: We have described the problem formulation and variation of problem in Section \ref{problemfor}. We have described previous work in Section \ref{prevwork}. We have described our proposed algorithm for scheduling of uniform chains of  splitable and non-splitable multiprocessor unit time tasks in Section \ref{unichain}. Similarly, we have described algorithm for scheduling of monotone chains of splitable and non-splitable multiprocessor unit time tasks in Section \ref{monotone}. Section \ref{arbichain} describes about compared three heuristics to solve the problem of scheduling arbitrary multiprocessor task chains and evaluates their performance on various scenarios for both splitable and non-splitable task chains. Finally, we have concluded about paper and pointed future works in Section \ref{concl}.

\section{Problem Formulation} \label{problemfor}  

\subsection{Scheduling of multiprocessor tasks on $M$ processor system with chain constraint}

A collection of $N$ application $C$ = $C_{1}$,$C_{2}$,....,$C_{N}$ has to be executed by $M$ identical processors. Each application or chain consists of $n_i$ phases or tasks where $i \in [1,2,..N]$. The processors requirement of task $T_{ij}$ is  $p_{ij}$, where $T_{ij}$ is $j^{th}$ phase or task of application $C_i$ and it satisfies $1 < p_{ij} \leq M$. Execution time $t_{ij}$ of each tasks $T_{ij}$ may be arbitrary. Figure \ref{figure 1} shows an example of application system. In this example, we have 4 applications or chains ($C_{1}$, $C_{2}$, $C_{3}$ and $C_{4}$). Application $C_{1}$, $C_{2}$, $C_{3}$ and $C_{4}$ have 4 phases, 3 phases, 5 phases and 4 phases respectively. A task of an application can't start execution before complete execution of its predecessor task of the same application.

An optimization criterion of multiprocessor scheduling is minimizing makespan time  $C_{max}$. The makespan is defined as the total length of the schedule i.e. when all tasks of all applications are finished i.e. $C_{max} = \smash{\displaystyle\max_{0 \leq i \leq N}} \{F_i\} $ where $F_i$ is the finishing time of $i^{th} $chain or application. 

\begin{figure}[tb]%
\begin{center}%
\subfigure[Arbitrary time]{%
\label{figure 1}%
\includegraphics[scale=0.38]{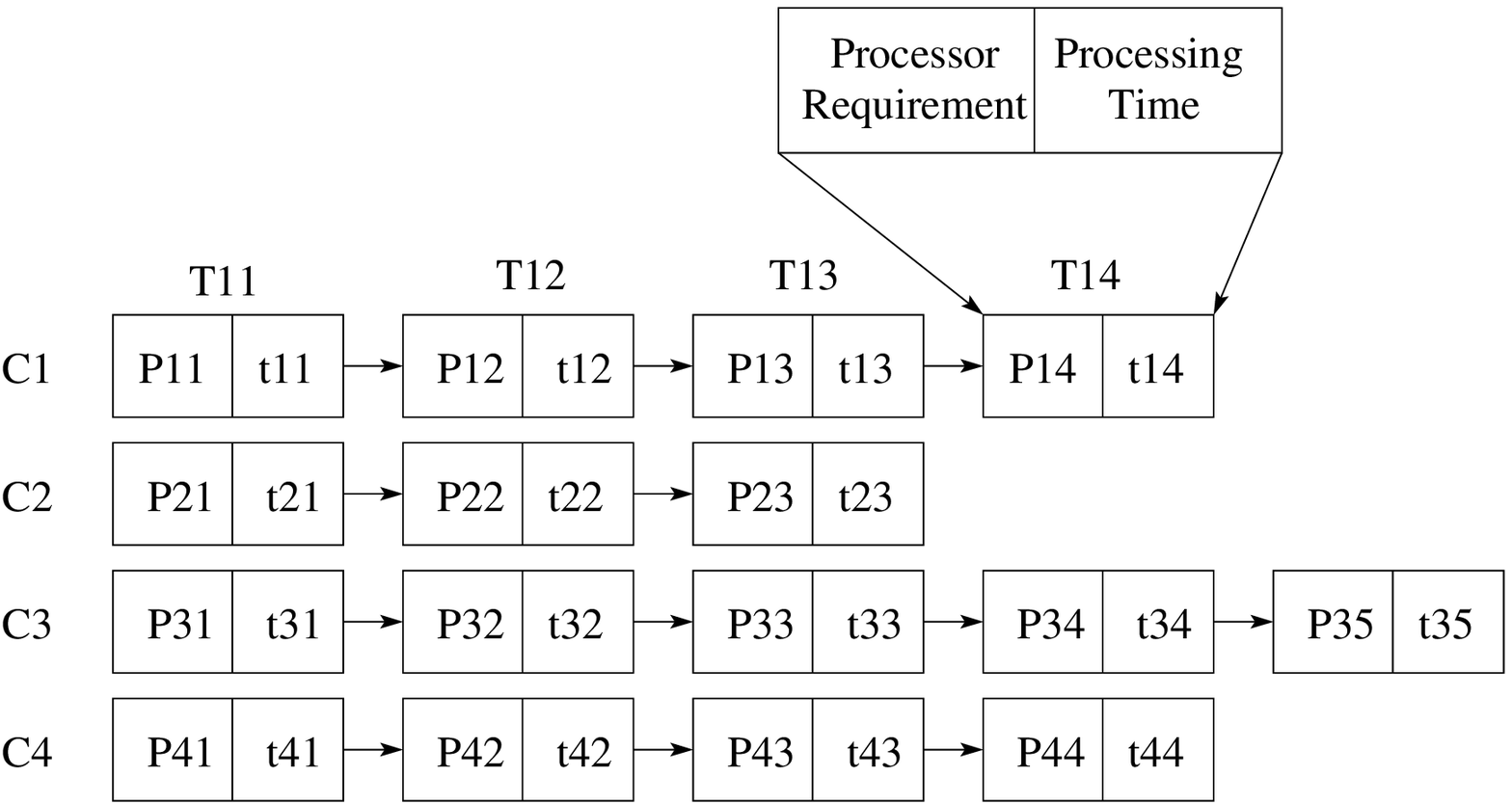} 
}
\hspace{-0.3cm}
\subfigure[Unit time]{%
\label{figure 2}%
\includegraphics[scale=0.38]{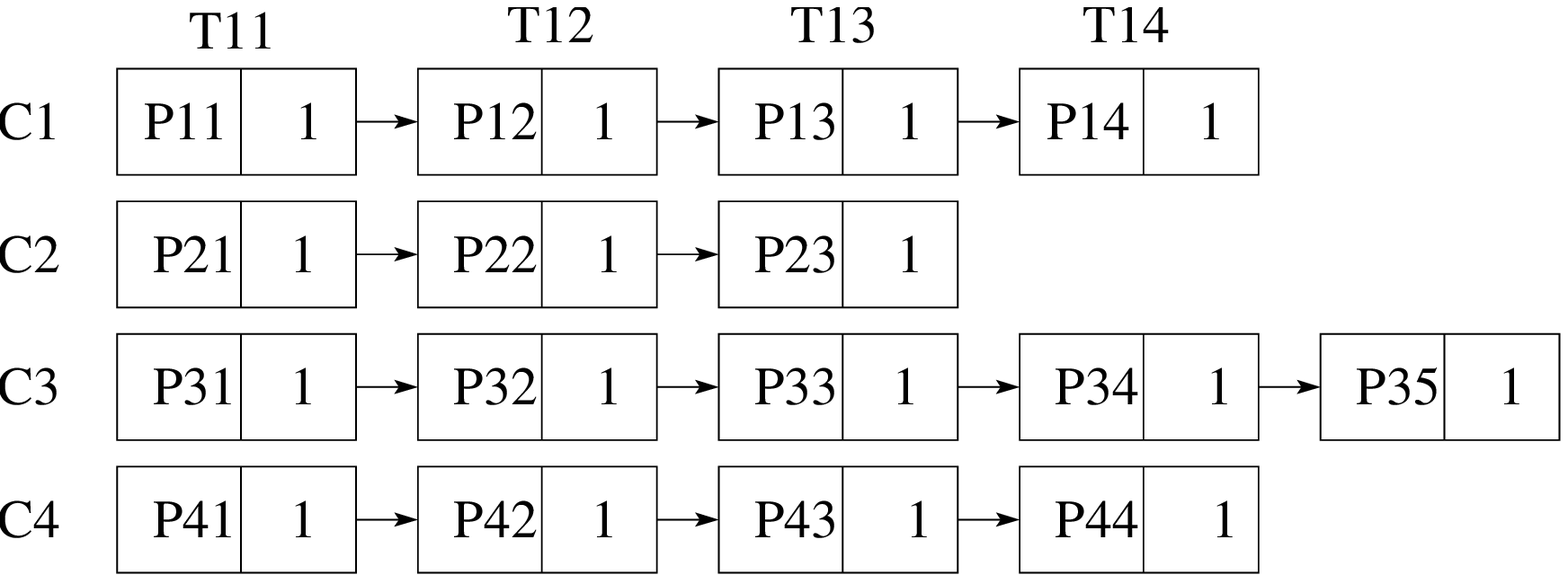} 
}\end{center}
\caption{\textbf{Multiprocessor tasks with chain precedence}}
\label{mptasks}
\end{figure}

The problem of scheduling of multiprocessors arbitrary time tasks with chains is NP-complete, so for simplicity we assume execution time of each tasks $T_{ij}$ equal to unit length i.e. $t_{ij}$ = $1$ where $i$ = $1$ to $N$ and $j$ = $1$ to $n_{i}$, in this paper. We also assume that there is no communication delay between tasks of an application and among the  applications. In the paper, we have used $n$ as total number of tasks (which is different from $n_i$, the number of phases of application $C_i$) and $m$ or $M$ as number of processor in the system.

\subsection{Considered types of multiprocessor tasks}

In this paper, we have assumed two types of tasks: (a) Non-splitable tasks i.e. task can only be processed when all the required number of processor by task are allocated to task at a time and (b) splitable tasks i.e. task can be processed by allocating all required number of processor in pieces at different time. Clearly, we can categorize the  multiprocessor unit time tasks with chain precedence into three following cases: 
\begin{enumerate}
\item \textbf{Uniform chains:} All the tasks of a chain have same number of processors requirement and tasks of different chains may have different processors requirement. Example of this kind of task system is shown in Figure \ref{figuniform}. For a given chain, all the task have same number of parallelism or processor requirement, but this may be different for different chain. 

\begin{figure}[tb]%
\begin{center}%

\subfigure[Uniform]{%
\label{figuniform}%
\includegraphics[scale=0.38]{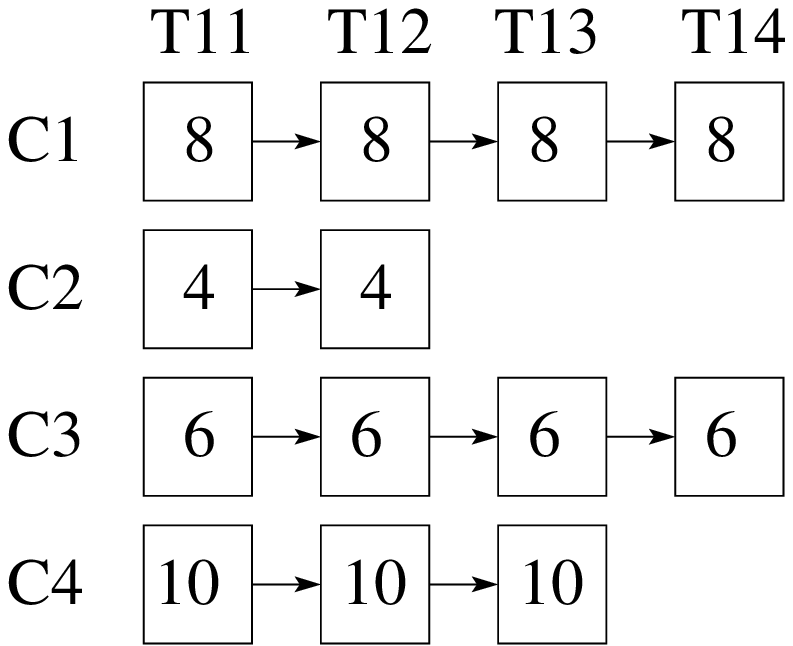} 
}
\subfigure[Decreasing]{%
\label{figdecreasing}%
\includegraphics[scale=0.38]{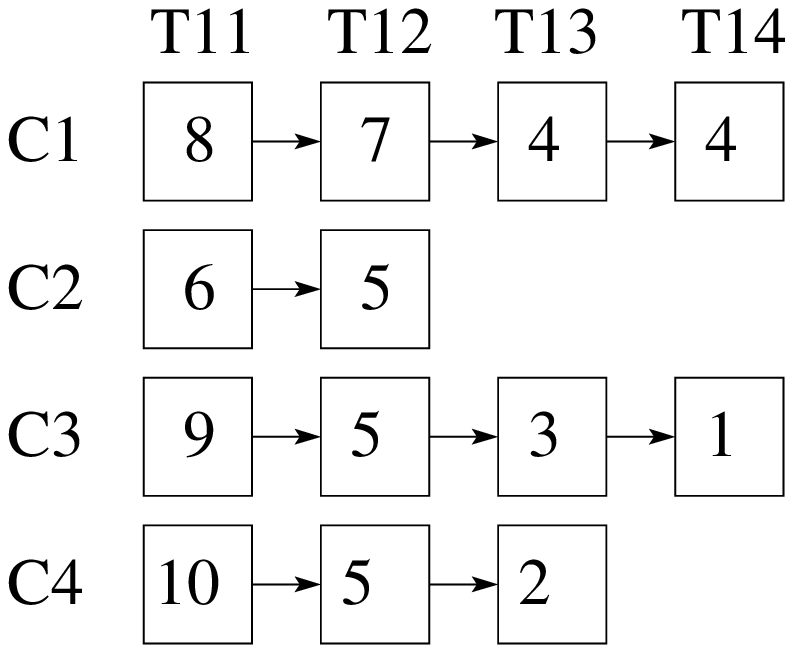} 
}
\subfigure[Increasing]{%
\label{figincreasing}%
\includegraphics[scale=0.38]{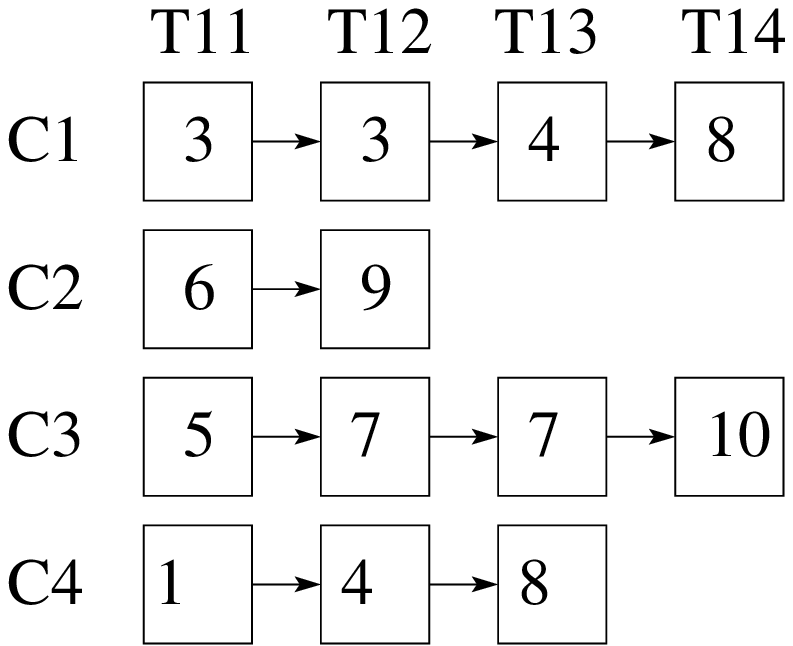} 
}
\subfigure[Arbitrary]{%
\label{figarbitrary}%
\includegraphics[scale=0.38]{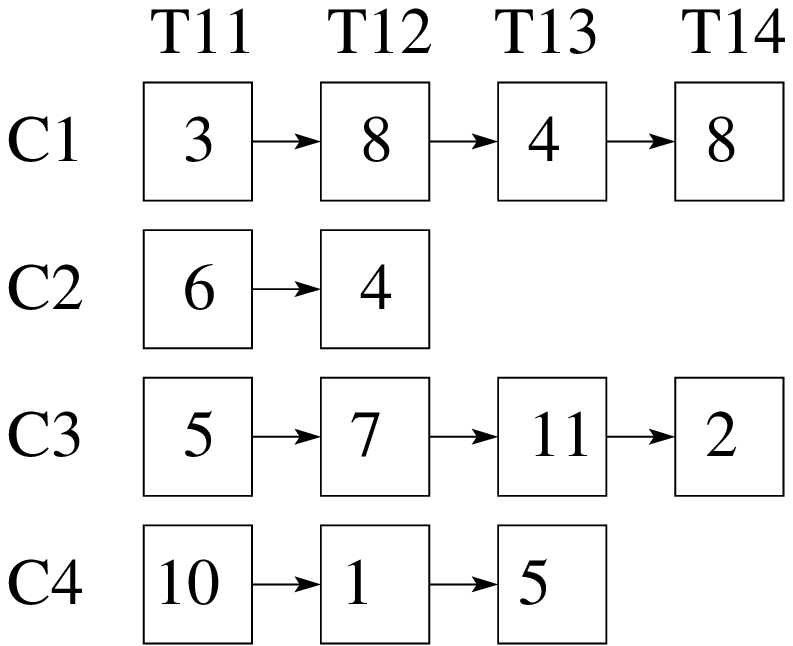} 
}
\end{center}
\caption{\textbf{Multiprocessor unit time tasks with chain precedence}}
\label{umptasks1}
\end{figure}

\item \textbf{Monotone chains:} All the tasks of a chain have non-increasing (or non-decreasing) processor requirement. All the chains of task system are one type of chain either non-increasing or non-decreasing. Figure \ref{figdecreasing} and \ref{figincreasing} shows non-increasing and non-decreasing monotone chains of multiprocessor tasks respectively. Considered monotone task system does not contain mix of both non-increasing and non-decreasing chain.

\item \textbf{Arbitrary Chains:} In this case, tasks of a chain have any arbitrary processors requirement in arbitrary order as shown in Figure \ref{figarbitrary}. 

\end{enumerate}

We have discussed scheduling approach of all three types of chains of multiprocessor tasks (uniform, monotone and arbitrary) and also with both types of multiprocessor tasks (splitable and non-splitable). So it becomes in total six different types of chain of multiprocessor task system. We know that lower bound (LB) of makespan time of scheduling chains of multiprocessor task on multiprocessor can be calculated as \cite{23}
\begin{equation}
LB = \max \Big( \frac{\sum_{i=1}^{N}\sum_{j=1}^{n_i} p_{ij}}{M}, max\{n_i\} \Big)
\end{equation}
where $p_{ij}$ is processor requirement of $j^{th}$ task of $i^{th}$ application, $N$ is number of chain, $M$ is number of processor and $n_i$ is number of phase/task of $i^{th}$ chain. Let us assume OPT is optimal makespan time produced by an optimal algorithm. So makespan time of any arbitrary algorithm will be $LB + P_{waste}$ and it will satisfy the following relation.
\begin{equation}
LB \leq OPT \leq LB + p_{waste}
\end{equation} 
where $p_{waste}$ is minimum average CPU time wastage. Average CPU time wastage is calculated as ratio of total CPU time wastage and total number of processor ($M$). CPU time wastage at any time slot is the number of number of free processor at that time unit. Wastage of CPU time happens because of these following reasons.
\begin{enumerate}
\item Some processors may be free at one time slot because the remaining processors are not sufficient for any ready task (processor requirement of ready tasks is higher than the available free processors).
\item Some processors may be free if the total requirement of all ready tasks is less than the total available processors at any one of the time slot.
\end{enumerate} 

\section{Previous Work} \label{prevwork}

\subsection{Scheduling task on multiprocessor}

As described in Section 1, non-preemptive scheduling of independent tasks on $m \geq 3$ processors is NP-complete. Also, the problem of scheduling a finite set of tasks having some precedence constraint on finite set of multiprocessor with goal of minimizing makespan is NP-complete for most of the cases except for a few simplified cases. Many heuristics with polynomial-time complexity have been suggested based on their assumptions about the structure of the parallel program and the target parallel architecture \cite{7}. These assumptions includes (a) uniform task execution times, (b) zero inter-task communication times, (c) contention-free communication, (d) full connectivity of parallel processors, and (e) availability of unlimited number of processors.
\\However these assumptions may not hold in real world for a number of reasons. Even after making above assumptions, scheduling problem is NP-complete in these following cases \cite{7} : (a) scheduling tasks with uniform weights to an arbitrary number of processors and (b) scheduling tasks with weights equal to one or two units to two processors.

As stated in \cite{7} there are only three special cases for which there exist optimal polynomial time algorithms. These cases are (a) scheduling tree-structured task graphs with uniform node weights on arbitrary number of processors in linear time by Hu's \cite{10} highest level first heuristics, (b) scheduling arbitrary task graphs with uniform node weights on two processors in quadratic time by Graham et. al.  \cite{9}, (c) scheduling an interval ordered task graph with uniform node weights to an arbitrary number of processors have been solved in linear time by Papadimitriou et. al. \cite{6}. However, even in these cases, communication among tasks of the parallel program is assumed to take zero time.  In \cite{12}, Ullman proved that DAG scheduling problems where considered DAG's nodes have unit weights and system has $m$ processors are NP-complete. He also proved that DAG scheduling problem where nodes have either one or two as a weight value and system has two processors is also NP-complete. Figure \ref{table:1}, shows complexity of scheduling problems  without communication time between tasks in tabular form. 

\begin{figure}[tb]
\centering
\footnotesize
\setlength{\tabcolsep}{0.6mm}
 \begin{tabular}{|c|c|c|c|c|c|c|}
 \hline
  \multirow{3}{*}{Structure} & \multicolumn{3}{c}{Uniprocessor Tasks} & \multicolumn{3}{|c|}{Unit time multiprocessor tasks} \\  
  \multirow{3}{*} & \multicolumn{3}{c}{(Processing time)} & \multicolumn{3}{|c|}{(Processor Requirement)} \\ 
  \cline{2-7}
  \multirow{3}{*} & Unit & Arbitrary & Arbitrary & Unit & Arbitrary & Arbitrary\\
   \multirow{3}{*} &  & (Non-pre & (Preem- &  & (Non-spl & (Split\\
   \multirow{3}{*} &  &  emptive) & ptive) &  &  itable) & able)\\
  \hline
  DAG 	 & NPC		  & NPC & NPC & NPC & NPC &  OPEN\\ 
  	 & Ullman\cite{12} & \cite{12} & \cite{25} & \cite{25} &  Bz\cite{Blazewicz13} &  \\ \hline
  Tree 	 & $O(n)$	 & NPC &  $n \log n $ & $O(n)$ & NPC &  OPEN \\ 
  	 & Hu\cite{Blazewicz13} 	 & \cite{23} & \cite{4}  & Hu & Bz\cite{Blazewicz13} &  \\ \hline
  Chain & $O(n)$ & NPC & $n+m\log n$& $O(n)$ & NPC&  OPEN\\ 
        &  Hu \cite{Blazewicz13} & \cite{25} & \cite{4} & Hu\cite{Blazewicz13} & Bz\cite{Blazewicz13} &  \\ \hline
  Indepe- & $O(n)$ & NPC & $O(n)$ & $O(n)$ & NPC, & NPC\\ 
    ndent &        &     &  \cite{25}    &  & Bz\cite{5} & Bz\cite{5} \\ \hline
 \end{tabular}
\label{table:1}
\caption{Complexity of scheduling problems on P processors without communication time between tasks}
\end{figure}

\subsection{Scheduling multiprocessor task on multiprocessor}

The problem of scheduling multiprocessor tasks on multiprocessor is even more harder. Blazewicz $et~al.$ \cite{Blazewicz13},  proposed $O(n\log n)$ time algorithm for chained multiprocessor tasks where processor requirement is uniform for each task in a chain, where $n$ is number of task  and $m$ is number of processor. They proposed $O(n\log n)$ algorithm for same type of applications with processor requirement in either increasing or decreasing fashion in a chain but having only two types of tasks either requiring $1$ processor or $k$ processor. Gonzalez $et~al.$ \cite{4},  proposed a polynomial time preemptive algorithm for scheduling trees in $O(n\log m)$ time in off-line mode. Algorithm given by them solves forests of $n$ tasks onto $m$ identical processors by minimizing the number of preemption in worst case. Blazewicz $et~al.$ \cite{2}, proposed scheduling algorithm for independent processor tasks. They divided the tasks into two sets $t-type$ sets and $w-type$ sets. $t$-type tasks are those tasks which require one arbitrary processor for execution and $w$-type are those which require two arbitrary processors. The non preemptive version of scheduling $t$ and $w$ types tasks is NP-complete but in this preemptive version is given which schedules $t$ and $w$ tasks in $O(n\log m)$ time. Blazewicz $et~al.$ \cite{5} proposed linear time algorithm for scheduling tasks requiring an arbitrary number of processors between $1$ and $k$, where $k$ is fixed integer and unit time processing. If $k$ is not fixed than problem is NP-complete. They considered both preemptive and non preemptive versions for two types of problem sets. One is if there are only two types of tasks in the set, one requiring $1$ processor and other requiring $k$ processors then both non preemptive version and preemptive version take $O(n)$ time to schedule the tasks. Second is if tasks require arbitrary number of processors between $1$ and $k$ then non preemptive version gives complexity O$(m^{k-1}.n^k)$ and preemptive version gives complexity of $O(n^m)$. Li $et~al.$ \cite{8}, proposed a task duplication based scheduling algorithm for fork-join task graph with complexity of $O(n^2)$. Blazewicz $et~al.$ \cite{3},  proposed a scheduling algorithm for two-processors tasks on uniform $2$-processor system and schedule $w$ and $t$ types tasks in $O(nm+n\log n)$ time. 

\section{Uniform Chains of  Multiprocessor Unit Time Tasks} \label{unichain}

\subsection{Splitable multiprocessor tasks}  \label{secunipitable}
In this section, we propose an optimal algorithm longest chain maximum processor first (LCMPF) for the minimization of makespan using simple rules. As shown in algorithm \ref{algo1} it works on two criteria, first one is length of chain  and second is processor occupancy of task. The application which has longest chain means maximum number of phases (application length) will first scheduled. If two or more applications or chains are of same length then the application which have more number of processor requirement (initial requirement) of task will be scheduled first. Figure \ref{figuniformex} shows an example of uniform chain multiprocessor unit time task system. Initial processor requirement of chain $C_1$, $C_2$, $C_3$ and $C_4$ is $p_{1}$, $p_{2}$, $p_3$ and $p_4$ respectively. Total number of phases or tasks in $C_1$, $C_2$, $C_3$ and $C_4$ is $4$, $3$, $5$ and $4$ respectively. 

\begin{algorithm}[tb]
\footnotesize	
\textbf{Input:} Set of $N$ Application Chains and $M$ Processors.  
\begin{algorithmic}[1]
\WHILE {All chains are not scheduled completely}
\STATE Sort chains in the non-increasing order of remaining unscheduled chain length.
\WHILE{at least one processor is free}
\STATE \textbf{if} {there are two or more chains of same unscheduled length} \textbf{then}
    \STATE $\:\:\:$  Select next ready task $T_{ij}$ from that chain which has max. proc. req. (initial req.) in all chains of same length.
\STATE \textbf{else} Select next ready task $T_{ij}$ from chain with longest unscheduled length.
\STATE \textbf{if} {Remaining processors m are more than $p_{ij}$ of selected task $T_{ij}$} \textbf{then}
\STATE  $\:\:\:$ Schedule task $T_{ij}$ on Allocated Processors	
\STATE \textbf{else} Schedule task $T_{ij}$ on remaining processors m and make this task as next ready task for current chain with proc. req. equal to $p_{ij}$-m.
\ENDWHILE     		
\ENDWHILE
\end{algorithmic}
\caption{\textbf{Longest Chain Maximum Processor First (LCMPF)}}
\label{algo1}
\end{algorithm}

\begin{figure}[tb]%
    \centering
    \subfigure[Uniform chains of multiprocessor unit time task]{{\label{figuniformex}\includegraphics[scale =0.38]{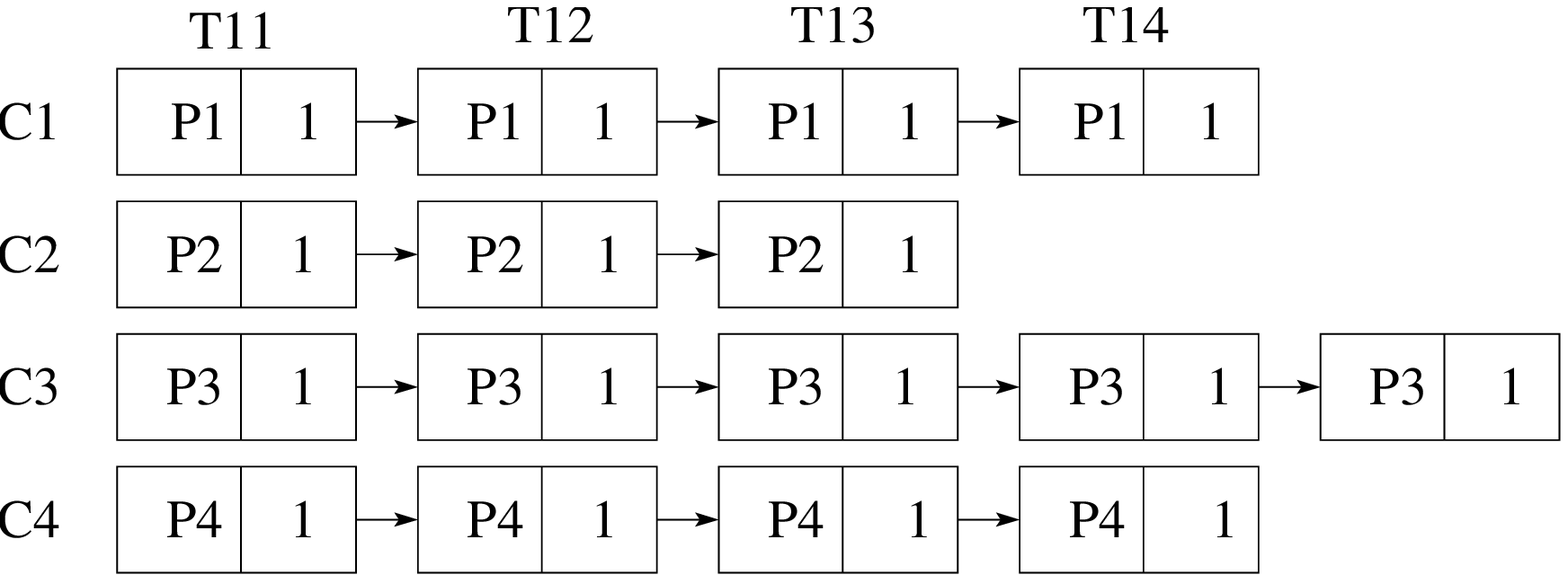} }}%
\hspace{2mm}
    \subfigure[Assuming some values: applications set at time = 1]{{\label{figuniformexval}\includegraphics[scale = 0.38]{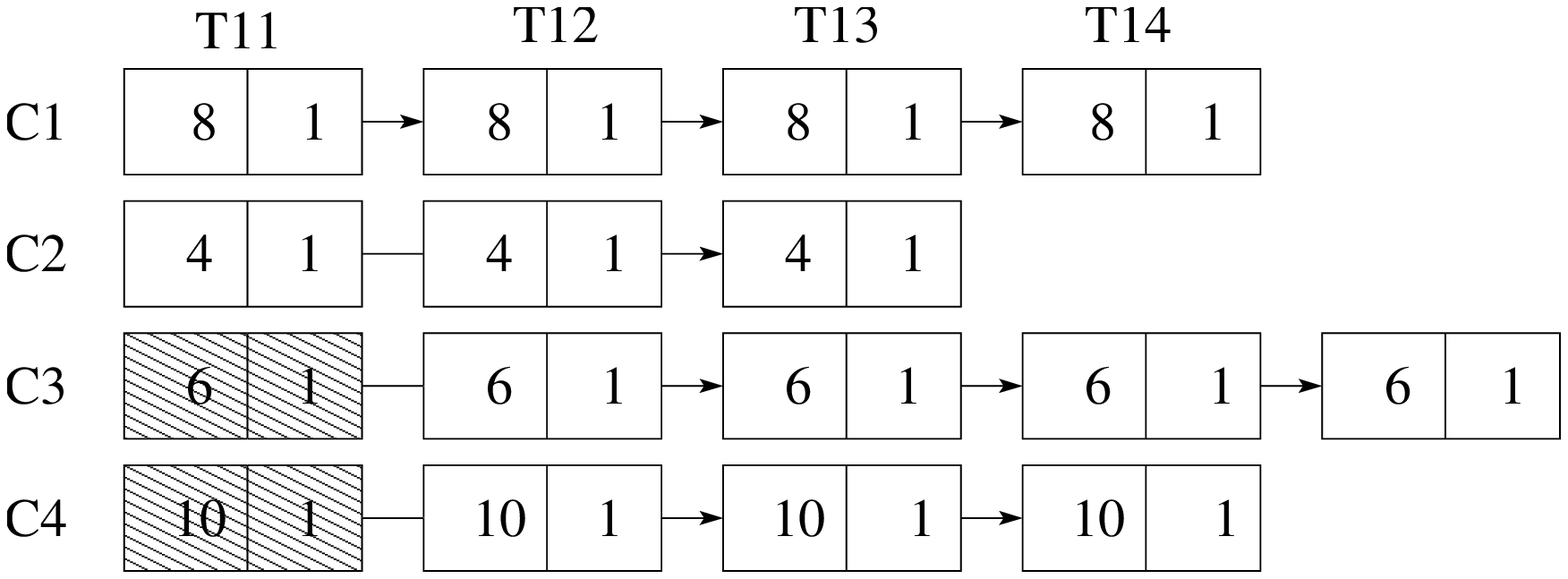} }}%
\caption{Example of uniform chains of multiprocessor unit time task}
    \label{figasrbitrary}%
\end{figure}

Pseudocode for LCMPF is shown in Algorithm \ref{algo1}. In this example, assuming M = 16, $p_1$ = 8, $p_2$ = 4, $p_3$ = 6 and $p_4$ = 10, uniform chain task system of multiprocessor unit time tasks is shown in Figure \ref{figuniformexval}.  algorithm works as follows: initially at time $t = 0$, chains in sorted order based on remaining unscheduled length are $C_3$, $C_1$, $C_4$ and $C_2$ and number of free processors is $16$. So task $T_{31}$ will be scheduled first. Now, number of remaining free processors is $m = 16-6 = 10$. Now, next longest chains are $C_1$ and $C_4$ but initial processor requirement of $C_4$ (i.e. $p_4$ = 10) is greater than initial processor requirement of $C_1$ (i.e. $p_1$ = 8), so next task to be scheduled will be $T_{41}$. Now, number of remaining free processors is $m = 10-10 = 0$ so no task can be scheduled next in this time slot. As there is no free processor so CPU time wastage in current time slot ($t = 0$) is 0.     

Time complexity of this can be analyzed as follows:  Sorting chains requires $O(N\log N)$ time and finding chain which requires maximum initial processors from longest chains of same length takes $O(N)$ time. For one time slot, scheduling will take time O($N+N\log N$) time, so total time complexity of scheduling task system by LCMPF algorithm is $C_{max}$.$(N+N\log N)$, where $C_{max}$ is constant.

Let us assume length of chain have an upper bound $L$. In this case, instead of sorting we can use $L$ different bins to put the application with length $l$ to bin number $l$ where highest bin is $L$, next highest bin is $L-1$ and so on. This will take $O(N)$ time to put all the applications to bins. If we want all applications in each bin to be in sorted order based on initial processor requirement, then process of inserting all applications to bins requires $O(N\log N)$ time.

Every time slot, we need to take out some applications from the current highest bin ($L^{'}$) and get inserted to next highest bin ($L^{'}-1$). If we want applications in the bins in sorted order based on initial processor requirement then it will take $O(N)$ time. Let $\alpha$ be the number of applications get executed in a time slot, then we need to do maximum $\alpha$ applications to be removed from current highest bin ($L^{'}$) to next current highest bin $(L^{'}-1)$. So time complexity of this operation will be $O(\alpha+\beta)$, where $\beta$ is the number of applications already in next current highest bin $(L^{'}-1)$. Overall time complexity of algorithm will be $O(N\log N)+C_{max}.(\alpha+\beta)$.   

\begin{theorem}
Longest chain maximum processor first algorithm always gives optimal makespan time for uniform chains with splitable tasks.
\end{theorem}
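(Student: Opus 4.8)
The plan is to show that the makespan produced by LCMPF equals the lower bound $LB=\max\!\big(\lceil\sum_{i,j}p_{ij}/M\rceil,\ \max_i n_i\big)$ recalled above. Since $LB$ is a lower bound on the makespan of \emph{every} schedule --- the first term is the total-work bound and the second is forced because the $n_i$ unit-time tasks of chain $C_i$ are run in sequence and so occupy at least $n_i$ distinct slots --- attaining $LB$ is exactly optimality, so the whole content is the upper bound: LCMPF finishes by time $LB$.

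The first ingredient is a ``no unforced idleness'' property of Algorithm~\ref{algo1}: in any slot in which at least one processor is left free, every chain not yet finished completes (at least) one more task in that slot, and no task is split in that slot. This is immediate from the rules of LCMPF, which walk the live chains in priority order, handing each the \emph{full} residual requirement of its current ready task; had the algorithm been short of processors for some ready task it would, by the split branch, have given that task \emph{all} the remaining processors, leaving none free --- a contradiction. Consequently processors are wasted only in ``tail'' slots where the combined residual requirement of all live chains is already below $M$, and in such a slot all live chains advance in lockstep by one task.

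The second step is an induction on $\sum_i n_i$. If no slot of the schedule is slack then every slot uses all $M$ processors and $C_{\max}=\sum_{i,j}p_{ij}/M=\lceil\sum_{i,j}p_{ij}/M\rceil\le LB$. Otherwise let $t_0$ be the first slack slot: slots $1,\dots,t_0-1$ are full, and by the lemma every live chain finishes its current task during slot $t_0$, so at the start of slot $t_0+1$ LCMPF is, in effect, restarted on a strictly smaller instance $\mathcal{I}'$ in which each still-live chain $i$ has $\ell_i'=n_i-(\text{tasks of }i\text{ completed through slot }t_0)$ \emph{fresh} tasks of width $p_i$ (the memoryless sort-and-pack nature of LCMPF makes this identification exact). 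Applying the induction hypothesis to $\mathcal{I}'$ and prepending the $t_0$ initial slots, it remains to verify the numerical inequality $t_0+\max\!\big(\lceil W'/M\rceil,\ \max_i\ell_i'\big)\le LB$, where $W'=W-Mt_0+(\text{idle processors in slot }t_0)$ and $t_0-1\le W/M$ because the first $t_0-1$ slots are full.

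That last inequality is where the length priority of LCMPF has to earn its keep, and I expect it to be the main obstacle. The difficulty is that, once \emph{splitting} or \emph{ties in remaining length} occur, $\mathcal{I}'$ is not ``all fresh'' in a vacuous sense and, more to the point, two or more chains tied at the current maximum remaining length whose widths jointly exceed $M$ can freeze that maximum for one extra slot. What must be established --- either via a potential function coupling the sorted vector of residual chain lengths with the residual work, or via an exchange argument rewriting an arbitrary optimal schedule into the LCMPF schedule without increasing the makespan --- is that such a split or tied chain retains a top (or near-top) remaining-length priority, is therefore served first in the very next slot, can never fall two steps behind the longest chain, and hence the total idle processor-slots never exceed the unavoidable tail deficit $M\cdot LB-W$. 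With that in hand the numerical inequality follows and the theorem with it; everything else is routine bookkeeping.
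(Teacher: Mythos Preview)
Your route is genuinely different from the paper's. The paper does not attempt to show that LCMPF attains the lower bound $LB$; instead the theorem is reduced to two short wastage lemmas. Lemma~\ref{lemma4.2} argues that the longest-chain priority never increases future CPU wastage relative to an optimal algorithm $A'$, and Lemma~\ref{lemma4.3} argues that the max-processor tiebreak never increases current-slot wastage. Both are argued informally---Lemma~\ref{lemma4.2} is essentially illustrated by a two-chain worst/best-case picture rather than proved in generality---and the theorem is then stated as their conjunction. So the paper's argument is an exchange/wastage comparison against a hypothetical optimum, not the direct $LB$-matching induction you propose.

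Your plan is structurally sharper, but it has a gap you yourself name: the inequality $t_0+\max\bigl(\lceil W'/M\rceil,\max_i\ell_i'\bigr)\le LB$ is the entire content of the inductive step, and you only sketch two possible attacks (potential function, exchange) without carrying either out. As written, the proposal is a map with the destination marked and the hardest leg of the journey described but not travelled. The obstacle you flag---tied-length chains whose aggregate width exceeds $M$ freezing the maximum for a slot---is real but tameable: at most one chain can be split in any LCMPF slot (the inner loop terminates the instant free processors hit zero), and that split chain keeps (co-)maximal remaining length in the very next slot, so it is reached and served before any shorter chain can be split; an easy bookkeeping shows the maximum remaining length drops by exactly one per slot until it equals the second-largest, and from there the argument iterates. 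Writing this down gives $\max_i\ell_i'\le(\max_i n_i)-t_0$ and closes the length-dominated case; the work-dominated case follows from $W'\le W-M(t_0-1)$, which you already have.

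If you prefer to avoid that invariant entirely, the exchange argument you mention in passing---take an optimal schedule and swap allocations slot by slot toward LCMPF's choices, showing makespan never increases---is both closer in spirit to the paper's Lemmas~\ref{lemma4.2}--\ref{lemma4.3} and easier to finish than the $LB$-matching induction.
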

\begin{proof}
We always try to use all the processors at all the time slots to make $p_{waste}$ minimum. CPU time wastage happens when the total number of processor requirement of all the ready tasks in a particular time slot is less than $M$. Optimality of LCMPF is shown as result of proof of Lemma \ref{lemma4.2} and Lemma \ref{lemma4.3}.
\end{proof}

\begin{lemma} \label{lemma4.2}
Selecting tasks from long chain first will not increase the CPU wastage time in future time slot.
\end{lemma}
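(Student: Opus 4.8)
The plan is to argue by an exchange argument on the structure of an optimal schedule, comparing the ``longest chain first'' allocation against any other allocation at a generic time slot. Fix an arbitrary time slot $t$ and suppose that, at the start of $t$, the set of not-yet-completed chains is $\mathcal{R}$; since the tasks within a uniform chain all have the same processor requirement, the behaviour of a chain $C_i$ is completely described by two numbers: its remaining length $\ell_i$ (number of unscheduled phases) and its fixed width $p_i$. The key structural observation I would establish first is that, for uniform chains with splitable tasks, the makespan of any schedule is governed purely by two quantities per time slot --- how much processor-time is consumed and how many distinct chains make progress --- and that a chain can always absorb \emph{any} amount of free processor-time up to $p_i$ (this is exactly where splitability is used, and where the proof would break for non-splitable tasks). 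Consequently, in the LCMPF schedule, no processor is ever idle at a time slot unless the total remaining width $\sum_{i\in\mathcal{R}} p_i$ (suitably counting split remainders) is already below $M$.

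Next I would formalize the claim as follows. Let $S$ be the LCMPF schedule and let $S'$ be any alternative schedule that agrees with $S$ on all slots before $t$ but at slot $t$ prefers a shorter chain over a longer one that $S$ would have picked. I want to show that the CPU wastage in $S'$ at some future slot is at least that of $S$, i.e.\ deviating cannot help. The argument: because tasks are splitable and all unit time, the only thing that matters for future wastage is the multiset of remaining chain lengths together with total remaining work. Picking the longer chain first never decreases the total work scheduled in slot $t$ (we always pack processors greedily to full in both $S$ and $S'$ as long as work remains), so the remaining total work entering slot $t+1$ is the same. What differs is the \emph{length profile}: LCMPF keeps the length profile ``as balanced from above as possible,'' which is precisely the condition under which the bound $C_{\max}=\max\bigl(\lceil \text{total work}/M\rceil,\ \max_i \ell_i\bigr)$ is met with equality. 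I would show by a pairing/exchange step that if $S'$ shortens a chain that was not longest, then in some later slot $S'$ is forced to run a chain whose length now exceeds the available ``slack,'' creating at least one idle processor that $S$ avoids. This is cleanest stated contrapositively: any slot of idleness in $S$ forces $\sum p_i < M$ over all surviving chains, a condition independent of the ordering, hence $S'$ cannot do better at that slot either.

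The main obstacle I anticipate is making the ``length profile'' comparison rigorous: I need a potential function or a dominance relation $\preceq$ on length-profiles such that (i) LCMPF's choice is $\preceq$-minimal among all greedy-packing choices at each slot, (ii) $\preceq$ is preserved under one more round of greedy packing, and (iii) $\preceq$-smaller profiles never incur more future wastage. Defining $\preceq$ as majorization of the sorted length vector, or equivalently tracking $\max_i \ell_i$ together with the count of chains achieving near-maximal length, should work, but verifying step (ii) --- that greedy packing is monotone with respect to this order --- requires care, since splitting a wide task leaves a ``remainder phase'' that slightly perturbs the profile. I would handle the remainder bookkeeping by treating a split task's leftover as a width-$(p_i-m)$ task of the same chain with the chain's length decremented only when the phase fully completes, and check that this bookkeeping commutes with the dominance order. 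Once that monotonicity is in hand, an induction on the number of remaining slots finishes the lemma, and Lemma \ref{lemma4.3} (presumably the symmetric statement about the max-processor tie-break) combines with it to give the theorem.
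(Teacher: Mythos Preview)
Your approach is genuinely different from the paper's, and considerably more structured. The paper's proof of this lemma is informal and essentially example-driven: it sets up a contradiction by supposing an optimal algorithm $A'$ that does not prioritize the longest chain, observes that such an $A'$ may leave two or more phases of a single application unscheduled at the very end (forcing serial execution and hence extra wastage), and then illustrates this with a concrete two-chain instance on $16$ processors (their Figure~\ref{figure16}) to conclude $p_{\text{waste}}(A') \geq p_{\text{waste}}(\text{LCMPF})$. No exchange argument, no potential function, no induction --- just the tail-serialization intuition plus a worked example.

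Your route via an exchange step and a majorization-type dominance on the sorted length profile is more ambitious and, if completed, would actually \emph{prove} the general inequality rather than exhibit one instance of it. The intuition you isolate (``longest-first keeps the length profile balanced from above, which is exactly what makes $C_{\max}=\max(\lceil W/M\rceil,\max_i\ell_i)$ tight'') is the right one, and it is implicit but never made precise in the paper. The delicate point you flag --- monotonicity of greedy packing under the dominance order, and the bookkeeping for split remainders --- is real and is precisely what the paper's argument glosses over by working at the level of an example. So: your plan is sound and strictly more rigorous than what the paper does; the paper, by contrast, buys brevity and conveys the core intuition (an unattended long chain becomes a serialized bottleneck) without carrying out the induction you propose.
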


\begin{proof}
Selecting long chain task first reduces the chances of free processors in future time slots. Let there be an algorithm $A^{'}$ which gives the optimal makespan time $T^{'}$ and our approach LCMPF is giving makespan time $T$.  As from equation (1) lower bound of makespan $LB$ will be same in both algorithms. As $A^{'}$ is optimal, we can say that
\begin{equation}
p_{waste}(A^{'}) \leq  p_{waste}(LCMPF)
\end{equation}
where $p_{waste}(A^{'})$ is average CPU time wastage of algorithm $A^{'}$ and   $p_{waste}(LCMPF)$ is average CPU time wastage of algorithm LCMPF.

If our approach selects task from longest chain (let the length of chain is $l$) then the optimal algorithm $A^{'}$ select task from any chain of length $l^{'}$ where $l^{'} \leq l$. This will happen each time and there will be 0 or more number of tasks of an arbitrary application remaining at last. Assume that two tasks of only application are remaining at last then the CPU wastage will increase because we can complete only one task at one time. 
\begin{figure}[tb]
\centering
\includegraphics[scale=0.43]{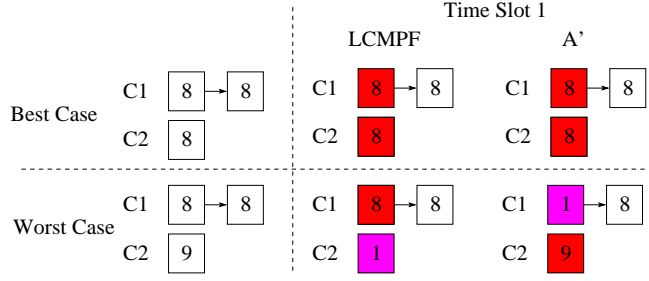} 
\caption{\textbf{Best and worst case behavior of algorithm $A^{'}$ and LCMPF}}
\label{figure16}
\end{figure}

The figure \ref{figure16} depict the worst case behavior. Assuming $16$ processors, in best case LCMPF algorithm selects $T_{11}$ first and $A^{'}$ selects either $T_{11}$ or $T_{21}$ first. In both cases total CPU time wastage will be $16-8 = 8$. In worst case LCMPF selects $T_{11}$ first so total CPU time wastage will be 16-(1+8) = $7$ only but if $A^{'}$ selects $T_{21}$ first then total CPU time wastage will be $(16-1)+(16-8) = 15+8 = 23$. So in last worst case there will be situation in $A^{'}$ that two phases of an arbitrary application will remain unscheduled i.e.
\begin{equation}
p_{waste}(A^{'}) \geq  p_{waste}(LCMPF)
\end{equation}
So by contradiction, we can say that selecting tasks from long chain first reduces the CPU time wastage.
\end{proof} 
\begin{lemma} \label{lemma4.3} 
Selecting task from highest number of processors requiring (initial requirement) chain will not increase the CPU time wastage of current time slot.
\end{lemma}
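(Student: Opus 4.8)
The plan is to argue as in the proof of Lemma~\ref{lemma4.2}, by comparing LCMPF against an optimal algorithm $A'$, but now concentrating on a single time slot in which two or more chains share the currently longest unscheduled length. First I would fix the state at the start of such a slot: every unfinished chain has exactly one ready task, and the set of ready tasks together with their current processor requirements is the same regardless of which of them the scheduler decides to start first. Let $\{r_1,\dots,r_k\}$ be these requirements and let $R=\sum_{i=1}^{k} r_i$.

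The key step is the observation that, because the tasks are splitable, the greedy allocation in Algorithm~\ref{algo1} always packs exactly $\min(M,R)$ processor-units into the current slot: as long as a free processor remains and some ready task is not yet fully allocated, the inner loop assigns it, splitting the task and carrying over the residual $p_{ij}-m$ if necessary. Hence the CPU time wastage of the current slot equals $\max(0,\,M-R)$, a quantity that depends only on the ready set and not on the order in which ready tasks are picked. In particular, giving priority to the chain with the largest initial processor requirement neither changes nor increases the wastage of the current slot, which is exactly the statement of the lemma.

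To see that this tie-breaking rule is moreover consistent with global optimality (and not merely harmless in isolation), I would then combine it with Lemma~\ref{lemma4.2}. Among chains of equal unscheduled length the chains are interchangeable as far as the length criterion is concerned, so reordering the slot to start the maximum-processor chain first turns any optimal schedule into one that still starts a longest chain in this slot, keeps $p_{waste}$ of the current slot unchanged, and --- by Lemma~\ref{lemma4.2} applied to the residual instance --- does not increase $p_{waste}$ in any later slot. An exchange argument then transports every optimal schedule to the LCMPF ordering without ever raising the total wastage, and since the lower bound~(1) is unaffected, the makespan stays optimal.

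The main obstacle is conceptual rather than computational: for splitable uniform chains the current-slot wastage is genuinely order-independent, so the literal content of the lemma is a \emph{does not increase} statement, and the real care must go into arguing that the tie-break also does not spoil later slots --- that is, that two equal-length uniform chains are truly symmetric with respect to everything the algorithm and the lower bound see. Making that symmetry precise, for instance by exhibiting a makespan-preserving bijection between the two residual instances obtained by swapping the start order of two equal-length chains, is the step I expect to require the most attention.
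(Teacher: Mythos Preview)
Your argument is correct, and in fact more careful than the paper's. The paper's own proof of this lemma is a two-line comparison: it posits an optimal algorithm $A'$ that, among equal-length chains, picks a ready task with strictly smaller processor requirement than LCMPF does, and then asserts that ``if $A'$ is occupying less processors then obviously $A'$ is giving more CPU time wastage than LCMPF.'' That is the entire argument; it never addresses what happens after the first pick in the slot, and it implicitly treats the single tie-breaking choice as if it alone determined the slot's occupancy.

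Your route is different and sharper: you observe that, because tasks are splitable, the inner loop of Algorithm~\ref{algo1} fills the slot with exactly $\min(M,R)$ processor-units no matter which ready task is started first, so the current-slot wastage $\max(0,M-R)$ is order-invariant. This immediately gives the ``does not increase'' conclusion, and it also explains why the paper's comparison argument is harmless even though it is incomplete as stated. The extra paragraph you sketch---combining the invariance with Lemma~\ref{lemma4.2} via an exchange argument on the residual instance---goes beyond what Lemma~\ref{lemma4.3} literally claims (that part belongs to the surrounding theorem), but it is a reasonable way to make the overall optimality proof airtight. If you keep only the invariance observation, you have a clean self-contained proof of the lemma that is strictly more rigorous than the paper's.
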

\begin{proof}
Let there be an optimal algorithm $A^{'}$ which gives the optimal makespan time then it will choose the task from the chain which requires initial processors at least one less than initial processors of task chosen by our approach LCMPF. If $A^{'}$ is occupying less processors then obviously $A^{'}$ is giving more CPU time wastage then LCMPF.
\end{proof}

Our approach LCMPF gives minimum CPU time wastage so it gives optimal makespan time. 

\subsection{Non-splitable multiprocessor tasks}

In this case, our application system consists of N uniform chains of non-splitable multiprocessor unit time tasks. Non-splitable multiprocessor task means task can be processed only by giving all the required number of processor as a whole at that time.

In the Algorithm \ref{algo2}, we are proposing an optimal algorithm for the minimization of make-span. This algorithm is similar to Algorithm \ref{algo1} except it does not use the second criteria (i.e. maximum processor occupancy). If we use LCMPF then also makespan time will be same for uniform chains having  non-splitable multiprocessor unit time  tasks but time complexity will increase because of second criteria. As soon as we found longest unvisited chain we select the next ready task of that chain, no matter if there are other chains of same length with high or less number of processor requirement. 
\begin{algorithm}[tb]
\footnotesize
\textbf{Input:} Set of $N$ Application Chains and $M$ Processors. 
\begin{algorithmic}[1]
\WHILE {All chains are not scheduled completely}
\STATE Sort chains in the non-increasing order of remaining unscheduled chain length.
\WHILE{Processors are free and at least one chain in unvisited}
\STATE \textbf{if} {only one unvisited chains of same unscheduled length} \textbf{then}
\STATE $\:\:\:$ Select ready task $T_{ij}$ from unvisited chain with longest unscheduled length.
\STATE \textbf{else} Select ready task $T_{ij}$ from any chain (or, FCFS basis).
\STATE \textbf{if} {Remaining processors m are more than $p_{ij}$}\textbf{then}
\STATE $\:\:\:$  Schedule task $T_{ij}$ on Allocated Processors
\STATE \textbf{else}  Mark this chain as visited;	
\ENDWHILE
\STATE Mark all chains as unvisited. $T_{sch}=T_{sch}+1$
\ENDWHILE
\end{algorithmic}
\caption{\textbf{Longest Chain First (LCF)}}
\label{algo2}
\end{algorithm}

To demonstrate the working of our scheduling approach, let us consider example of same set of applications described in Subsection \ref{secunipitable} and it is shown in Figure \ref{figuniformex}. Pseudo code for proposed longest chain first (LCF) approach is shown in Algorithm \ref{algo2}. The algorithm works as follows: initially at time $t = 0$, chains in sorted order based on remaining unscheduled length are $C_3$, $C_1$, $C_4$ and $C_2$ and number of free processors is $16$. So task $T_{31}$ will first  scheduled. Now, number of remaining free processors is $m = 16-6 = 10$ and next longest unvisited chains are $C_1$ and $C_4$. We can schedule ready task from any chain but we will schedule according to first come first served basis. It will not affect the makespan time, only scheduling order so next task to be scheduled will be $T_{11}$. Now, number of remaining free processors is $m = 10-8 = 2$. Next longest unvisited chain is $C_4$ but it has more processor requirement (i.e. $p_4$ = 10) than remaining free processors (i.e. $m$ = 2). Same problem is with $C_2$. Now, all the chains are visited so we can't schedule any task in current time slot. CPU time wastage in current time slot ($t = 0$) is 2.

If we assume maximum length of chain may be arbitrary, then complexity will be $C_{max}$ times the sorting time that is $C_{max}.(N\log N)$. But if length of chains are bounded by $L$ then instead of sorting we can use $L$ different bins to put the application with length $'l'$ to bin number $'l'$ where highest bin is $L$, next highest bin is $L-1$ and so on. This will take $O(N)$ time to put all applications to bins.

Every time slot, we need to take out some applications from the current highest bin ($L^{'}$) and get inserted to next highest bin ($L^{'}-1$). Let $\alpha$ be the number of applications get executed in a time slot, then we need to do maximum $\alpha$ applications to be removed from current highest bin ($L^{'}$) to next current highest bin $(L^{'}-1)$. So time complexity of this operation will be $\alpha$. Overall time complexity of algorithm will be $O(N)+\alpha.C_{max}$.   

\begin{theorem} \label{Theorem2}
Longest chain first algorithm always gives optimal makespan time for uniform chains with non-splitable multiprocessor unit time tasks.
\end{theorem}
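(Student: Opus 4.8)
The plan is to mirror the argument already given for Theorem~1 (the splitable case) and adapt it to the non-splitable setting, exploiting the fact that in a uniform chain every task of chain $C_i$ has the same processor requirement $p_i$. First I would recall the lower bound $LB = \max\big(\frac{\sum_{i,j} p_{ij}}{M},\ \max_i\{n_i\}\big)$ from equation~(1) and observe that for uniform chains $\sum_{j=1}^{n_i} p_{ij} = n_i p_i$. The goal is to show that LCF never leaves a time slot with avoidable wastage, i.e.\ that $p_{waste}(\text{LCF})$ matches the unavoidable wastage of an optimal schedule $A'$, from which $C_{max}(\text{LCF}) = LB + p_{waste} = OPT$ follows via equation~(2).

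The key steps, in order: (i) Establish a structural invariant: since all tasks of a given chain are identical in width, the set of ready tasks at any time slot is in bijection with the set of not-yet-completed chains, and scheduling ``the next ready task of chain $C_i$'' is interchangeable among chains of equal remaining length — this justifies the FCFS tie-break being makespan-irrelevant, as the algorithm text claims. (ii) Invoke Lemma~\ref{lemma4.2}: selecting tasks from the longest remaining chain first does not increase future CPU wastage; the proof there only used that a single leftover chain forces serialization, and uniformity of widths means that argument carries over verbatim to the non-splitable case. (iii) Argue that the second criterion (maximum initial processor requirement) used by LCMPF is unnecessary here: because tasks are non-splitable, a free block of $m < p_{ij}$ processors cannot be partially used regardless of which chain we picked, so the per-slot wastage is determined entirely by which \emph{multiset of widths} gets packed, not by the order; hence LCF and LCMPF produce schedules of identical length, and optimality of LCMPF (Theorem~1 reasoning, restricted to integer allocations) transfers. (iv) Close by a contradiction argument parallel to Lemma~\ref{lemma4.2}: if some $A'$ beat LCF, then at the first slot where they differ $A'$ must have either left a longer chain unserved (contradicting (ii)) or packed the same widths differently (contradicting (iii)), so $p_{waste}(A') \geq p_{waste}(\text{LCF})$, forcing equality.

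The main obstacle I anticipate is step~(iii) — making rigorous the claim that ``longest chain first'' together with greedily filling each slot actually achieves the minimum possible per-slot wastage for uniform non-splitable chains. Unlike the splitable case, where any residual processors can always be consumed by splitting, here one must rule out the possibility that a cleverer choice among equal-length chains (choosing widths that \emph{bin-pack} better into $M$) strictly reduces wastage in the current slot while the longest-chain rule forces a worse packing. The honest resolution is that for uniform chains the set of available widths in a slot is monotone-structured enough (each still-active chain contributes exactly one copy of its fixed width) that the longest-chain priority never conflicts with optimal packing in a way that costs an extra time slot — but if widths can be arbitrary this needs a careful exchange argument, possibly by induction on the number of time slots, showing any optimal schedule can be transformed slot-by-slot into the LCF schedule without lengthening it. I would flag this as the step requiring the most care and would present it as an exchange/transformation lemma before concluding.
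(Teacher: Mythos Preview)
Your fallback in the final paragraph --- the exchange/transformation lemma --- is precisely the paper's entire proof, and the paper goes to it directly rather than through the wastage analysis of Theorem~1. The paper observes that in a uniform chain every task has the same width, so the content of a time slot in any schedule is fully described by the multiset of chain labels it contains; it then argues that adjacent time slots of an optimal schedule $A'$ can be swapped (relabelling which task of each chain sits where preserves precedence because the tasks are identical), and by repeated adjacent swaps one ``sorts'' the optimal schedule into the LCF order without changing $C_{\max}$. That is the whole argument.

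Your primary route via steps (i)--(iii) is a detour, and the obstacle you flag in (iii) is real, not merely a matter of care. Lemma~\ref{lemma4.3}'s reasoning genuinely breaks in the non-splitable setting: choosing a \emph{smaller}-width chain can strictly improve the current slot's packing, so ``maximum processor first'' is neither necessary nor even harmless at the per-slot level, and you cannot recover optimality of LCF by transferring Theorem~1's wastage bound ``restricted to integer allocations.'' The exchange argument sidesteps this entirely because it never compares per-slot wastage; it only uses that uniformity makes tasks within a chain interchangeable, so the time-slot sequence of an optimal schedule can be permuted freely. Drop steps (i)--(iii), promote your last paragraph to the main argument, and you have the paper's proof.
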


\begin{proof} Let there be an optimal algorithm $A^{'}$ which gives schedule length $C_{max} = OPT$. Suppose the optimal algorithm produces a sequence of scheduled multiprocessor tasks at time slot $t_i$, where $t = 1$ to $C_{max}$. Suppose there are $N$ uniform chains of non-splitable multiprocessor tasks namely $a-chain$, $b-chain$, $c-chain$, $d-chain$ and etc. with processor requirement $a$, $b$, $c$, $d$ and etc. as shown in Figure \ref{figureS}.
\begin{figure}[tb]
\centering
\includegraphics[scale=0.43]{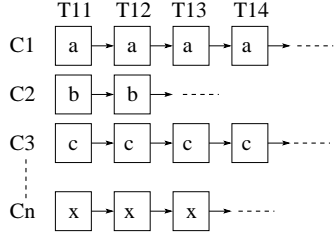} 
\caption{\textbf{Uniform chains of non-splitable multiprocessor unit time tasks}}
\label{figureS}
\end{figure} 
The optimal algorithm produces a output sequence of scheduled tasks for example $1(a,b)$, $2(a,c)$, $3(c,d,a)$, $4(a,d)$.........., where $i(j,k,..)$ represents a task from $j$, a task from $k$ and so on are scheduled at time slot $t = i$. We can see from the scheduled sequence that between a nearby pair, scheduled slot may be exchanged. For an example, $1(a,b)$ and $2(a,c)$ can be exchanged without affecting the optimality and precedence constraint to $2(a,b)$ and $1(a,c)$. So in this way, we can sort the scheduling sequence by $A^{'}$ to get the required scheduling sequence by LCF.
\end{proof}

As optimal sequence can be converted to LCF sequence without changing $C_{max}$, so our LCF is optimal. But we can't guarantee that LCF is the only algorithm which produces optimal result. There may be other algorithm that may produce optimal result.

\section{Monotone chains of  multiprocessor unit time tasks} \label{monotone}

Non-increasing chains of multiprocessor unit time tasks can be scheduled optimally. Our proposed approaches LCMPF (Algorithm \ref{algo1}) and LCF (Algorithm \ref{algo2}) will produce optimal makespan time for non-increasing chains of  splitable and non-splitable   multiprocessor unit time tasks respectively. An example of non-increasing chains of multiprocessor unit time tasks is shown in Figure \ref{figdecreasing}. Optimality of LCMPF and LCF for non-increasing chains of  splitable and non-splitable multiprocessor unit time  tasks respectively can be proved by Theorem \ref{Theorem5.1} and Theorem \ref{Theorem5.2}. 
\begin{theorem} \label{Theorem5.1}
LCMPF algorithm schedules non-increasing chains of splitable multiprocessor unit time tasks optimally. 
\end{theorem}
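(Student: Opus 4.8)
The plan is to show that LCMPF achieves the lower bound $LB$ from equation (1) exactly, i.e.\ $C_{max}(\mathrm{LCMPF}) = LB$, which by equation (2) forces $C_{max}(\mathrm{LCMPF}) = OPT$. The key observation is that for non-increasing chains, at any time slot the ready task of each still-unscheduled chain is the task with the \emph{largest} processor requirement remaining in that chain (since the chain is non-increasing, the head of the remaining suffix dominates). This is exactly the feature that makes the splitable case behave well: whenever a chain cannot be fully packed into the free processors, its leftover width is strictly less than the width it just presented, so the situation can only improve. I would therefore first reduce to the two structural lemmas already available: Lemma~\ref{lemma4.2} (taking tasks from the longest chain first does not create future waste) and Lemma~\ref{lemma4.3} (taking tasks from the highest-processor chain among equals does not create current-slot waste). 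Both lemmas are stated for uniform chains, so the real content of this theorem is to argue that their proofs carry over verbatim once ``processor requirement of the chain'' is read as ``processor requirement of the current head task.''

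The main steps, in order: (1) establish the monotonicity invariant — at every time slot, for every unfinished chain, the ready task is the one of maximum width among the not-yet-scheduled tasks of that chain, and moreover as a chain is consumed its presented width is non-increasing over time; (2) show that with splitable tasks, full processor utilization holds in every slot except possibly the last few: precisely, a processor sits idle in slot $t$ only if the sum of all remaining-chain head widths at time $t$ is already below $M$, and once that happens it stays that way and the remaining work is at most $M-1$ wide, so at most one extra slot beyond $\lceil (\sum p_{ij})/M\rceil$ is needed — and then check that $\max\{n_i\}$ absorbs this, because a slot that is not fully packed contributes a full unit of height to every surviving chain; (3) combine with Lemmas~\ref{lemma4.2} and~\ref{lemma4.3} to conclude that LCMPF never wastes a processor unnecessarily, hence $p_{waste}(\mathrm{LCMPF})$ is the minimum possible, hence $C_{max}(\mathrm{LCMPF}) = LB \le OPT \le C_{max}(\mathrm{LCMPF})$, giving equality.

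Concretely I would write: ``Since each chain is non-increasing, at any time slot $t$ the next ready task $T_{ij}$ of chain $C_i$ satisfies $p_{ij} \ge p_{ik}$ for every unscheduled $T_{ik}$ in $C_i$; hence the greedy width-selection rule of Algorithm~\ref{algo1} applied to chain heads is, slot by slot, the same decision problem analysed in Lemma~\ref{lemma4.3}, and the longest-chain-first rule is the one analysed in Lemma~\ref{lemma4.2}. Neither rule introduces avoidable idle time. Therefore the only idle processors in an LCMPF schedule occur in slots where the total head-width of the surviving chains is already less than $M$; in every such slot each surviving chain loses one task, so the number of such slots is at most $\max_i\{n_i\}$ minus the number of fully-packed slots, and $C_{max}(\mathrm{LCMPF}) \le \max\big(\lceil (\sum_{i,j} p_{ij})/M\rceil,\ \max_i\{n_i\}\big) = LB$. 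Combined with $LB \le OPT$ from equation (2), this yields $C_{max}(\mathrm{LCMPF}) = OPT$.''

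The hard part will be step (2): making rigorous the claim that the ``non-full'' slots are few enough to be dominated by $\max\{n_i\}$. One has to rule out a bad interleaving where LCMPF leaves a processor idle early (because the smallest head width exceeds the free slack) even though total remaining work is still large — this is precisely where splitability is essential, since a task that does not fit can be partially run and its residual width is then smaller, so no slot with free processors $\ge 1$ is ever truly ``blocked'' while any chain still has a task of width $\le$ (free processors). I would handle this with an exchange/charging argument mirroring Figure~\ref{figure16}: compare an LCMPF slot to the corresponding slot of an optimal schedule $A'$, show any idle processor in LCMPF forces at least as much idle time (now or later) in $A'$, and push the surplus to the tail exactly as in Lemma~\ref{lemma4.2}. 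Everything else — the monotonicity invariant and the arithmetic with $LB$ — is routine.
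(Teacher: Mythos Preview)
Your proposal is correct and follows essentially the same line as the paper: both hinge on the observation that, for non-increasing chains, the ready task of each chain is always its widest remaining task, so the greedy width rule of LCMPF behaves exactly as in the uniform-chain analysis and Lemmas~\ref{lemma4.2}--\ref{lemma4.3} carry over. The paper's own proof is a two-sentence sketch to precisely this effect; your plan is considerably more detailed (and aims for the sharper conclusion $C_{max}=LB$ rather than merely $C_{max}=OPT$), but the underlying idea is identical.
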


\begin{proof}
As shown in Theorem \ref{Theorem2}, $P_{waste}$ by longest chain maximum processor first is minimum for uniform chains of multiprocessor unit time tasks. As we are choosing maximum processor first if chain lengths are same and this satisfy both uniform chains and non-increasing chains. So it is obvious that choosing maximum processor task of a chain will choose the task with highest processor requirement of the chain which is the ready task of the chain. As the subsequent tasks of chain require less number of processor then the $P_{waste}(LCMPF)$ of current slot will be minimum and also maintains the $C_{max}$ to be optimal.
\end{proof}

\begin{theorem} \label{Theorem5.1}
LCF algorithm schedules non-increasing chains of non-splitable multiprocessor unit time tasks optimally. 
\end{theorem}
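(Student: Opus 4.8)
The plan is to argue, exactly as in the proof of Theorem~\ref{Theorem2}, that an arbitrary optimal schedule can be massaged into the schedule produced by LCF without changing its length $C_{max}=OPT$; since the resulting schedule is LCF's, LCF is optimal. The argument is by induction on the total number of tasks $n$, the step being a ``greedy-choice'' exchange: there is an optimal schedule that agrees with LCF on the first time slot, and peeling that slot off leaves a strictly smaller non-increasing-chains instance (each chain that was advanced loses exactly its front task, and because the chain is non-increasing the remaining tasks still form a non-increasing chain; untouched chains are unchanged), which LCF then schedules by the inductive hypothesis. The base case $N=1$, a single chain run in consecutive slots, follows from the lower bound, which already forces $n_1$ slots.

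For the greedy-choice step, let $I$ be the set of chains whose front task LCF runs in slot $1$ (obtained, as in Algorithm~\ref{algo2}, by scanning the chains in non-increasing order of remaining length and admitting each one whose front-task processor requirement still fits in the residual capacity), and let an optimal schedule $A'$ run the front tasks of a set $J$ of chains in its first slot. If $J\neq I$, look at a longest chain $i$ in the symmetric difference $I\triangle J$. Using feasibility of $A'$'s first slot together with the maximality of LCF's greedy admission rule, I would show $i\in I\setminus J$: if $i$ lay in $J\setminus I$ then the chains longer than $i$ are common to $I$ and $J$, LCF rejected $i$ only because their requirements plus $p_{i1}$ exceed $M$, and so $A'$'s first slot, containing all of them and $i$, would be overfull. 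Then I exchange: evict from $A'$'s first slot just enough shorter (or equal-length) chains to free $p_{i1}$ processors, bring chain $i$'s front task into slot $1$, and relocate the evicted front tasks and chain $i$'s old position into later slots. The key point making this possible is the non-increasing structure: a partially-executed non-increasing chain is never harder to pack than a fresh one, since its front task only ever gets lighter, so the relocation can be absorbed without lengthening the schedule. Iterating drives $J$ to $I$.

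The hard part is making this exchange rigorous under the \emph{non-splittable} constraint: an evicted front task cannot be broken up, so I must place it in a slot with enough free processors and check that this does not trigger a cascade of further evictions that lengthens $C_{max}$; and ties among equal-length chains must be shown genuinely immaterial (Algorithm~\ref{algo2} already assumes this), which amounts to arguing that equal-length chains are interchangeable for the purpose of makespan. I would control the cascade by always evicting the \emph{shortest} available chains and promoting the \emph{longest} pending one, so that each exchange strictly increases agreement with LCF's priority order while the multiset of remaining chain lengths (sorted and padded with zeros) only improves slot by slot; this also yields termination. The remaining items --- verifying that precedence is preserved when a chain's suffix shifts, and that processor idleness removed in one slot is never more than recreated elsewhere --- are routine bookkeeping.
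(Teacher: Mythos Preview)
Your plan is a different, considerably more careful route than the paper's. The paper's own proof is essentially a one-line appeal: it says that the adjacent-slot exchange used for uniform chains (the argument in Theorem~\ref{Theorem2}, mis-cited there as Lemma~\ref{lemma4.2}) ``can be done in case of non-increasing chain without affecting the optimality'' and stops. There is no discussion of why the swap remains feasible when front-task sizes are not constant along a chain, and no treatment of the relocation or cascade issues you identify. So where the paper simply asserts that the bubble-sort-style pairwise slot exchange still works, you instead set up an induction on~$n$ and argue a greedy-choice lemma for the first slot, which is a structurally cleaner decomposition and is explicit about exactly the point the paper glosses over.

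One place your plan needs tightening: in the non-increasing (as opposed to uniform) setting, chains of equal remaining length are \emph{not} interchangeable, since their front tasks may require different numbers of processors. Your argument that a longest $i\in I\triangle J$ must lie in $I\setminus J$ uses ``the chains longer than $i$ are common to $I$ and $J$, so LCF rejected $i$ only because their requirements plus $p_{i1}$ exceed $M$''; but LCF, before reaching $i$, may already have admitted same-length chains that are in $I\setminus J$, so the residual capacity when LCF looks at $i$ is not determined by $I\cap J$ alone. This does not break the overall strategy, but it means the tie case cannot be disposed of by a symmetry remark; you will need to fold equal-length chains into the exchange step itself (e.g.\ by choosing $i$ to maximise length and, among those, the position in LCF's scan order) rather than treating them as immaterial afterwards. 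The relocation bookkeeping you flag as ``the hard part'' is indeed the crux; your idea of always evicting the shortest and promoting the longest, together with the monotonicity observation that a partially executed non-increasing chain only gets lighter, is the right lever, and it is precisely what the paper's proof omits.
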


\begin{proof} \label{Theorem5.2}
As shown in Lemma \ref{lemma4.2}, scheduled sequence given by an optimal algorithm $A^{'}$ can be converted into scheduled sequence given by LCF without affecting optimality of makespan in case of  uniform chains. Same can be done in case of  non-increasing chain without affecting the optimality. 
\end{proof}

An example of non-decreasing chains of multiprocessor unit time tasks is shown in Figure \ref{figincreasing}. This type of chain can also be scheduled for both types (splitable and non-spilitable) of task same as uniform chains. LCMPF and LCF will give optimal make-span time for  non-decreasing chains of splitable and non-splitable multiprocessor tasks respectively. Optimality of LCMPF and LCF for non-decreasing chains of  splitable and non-splitable multiprocessor unit time tasks respectively can be proved by Theorem \ref{Theorem5.3} and Theorem \ref{Theorem5.4}.

\begin{theorem}\label{Theorem5.3}
LCMPF algorithm schedules non-decreasing chains of splitable multiprocessor unit time tasks optimally.
\end{theorem}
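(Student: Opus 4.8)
The plan is to obtain this result the same way the non-increasing case (Theorem \ref{Theorem5.1}) was obtained from the uniform case: reduce it to the minimization of $p_{waste}$, and then argue that the within-chain requirement pattern is irrelevant to how LCMPF wastes processors. Since, for any chain structure, every schedule has length $LB + p_{waste}$ and an optimal schedule has length $LB + (\text{minimum attainable } p_{waste})$, it suffices to prove that LCMPF attains the minimum possible $p_{waste}$ on non-decreasing chains of splitable tasks.

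First I would note that Lemma \ref{lemma4.2} — selecting tasks from the longest chain first does not increase future CPU wastage — is established purely from chain lengths and the pairwise exchange argument illustrated in Figure \ref{figure16}; its proof never uses that the chains are uniform, so the long-chain-first rule of LCMPF remains the correct greedy choice for non-decreasing chains, i.e.\ it never forces an avoidable idle slot later in the schedule. Next, for the wastage inside a single slot: because the tasks are splitable, at each slot LCMPF keeps assigning free processors to ready tasks (splitting the last one that does not fit) until either all $M$ processors are busy or every ready task has been advanced; hence the number of idle processors in that slot equals $\max\{0,\, M - \sum_{\text{ready}} r\}$, a quantity that depends only on the set of tasks that are ready and not on the order in which LCMPF processes them. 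In particular the tie-break of Lemma \ref{lemma4.3} (highest initial requirement first among chains of equal remaining length) only decides which ready task gets split, never the idle count — exactly as in the non-increasing case. Combining the two observations, $p_{waste}(\mathrm{LCMPF})$ is minimum, so the produced schedule length equals $OPT$.

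The step I expect to be the main obstacle is making the within-slot argument fully rigorous against the back-loaded shape of non-decreasing chains: one must check that the greedy early assignments of LCMPF never leave some later slot with a strictly smaller ready requirement $\sum_{\text{ready}} r$ than an optimal schedule of the same length would have at the corresponding slot, since otherwise a different early choice could shrink a forced idle gap. For non-decreasing chains the small tasks come first, so work is naturally pushed toward the end and is often carried across slots by splitting, and one has to verify that delaying a long chain is still never beneficial and that splitting a large task never strands work that a smarter order could have packed. I would discharge this by the same slot-by-slot exchange/induction used in Lemma \ref{lemma4.2}: show that any optimal schedule can be sorted, slot by slot, into the order produced by LCMPF without increasing its length, which forces $p_{waste}(\mathrm{LCMPF}) \le p_{waste}(A')$ for every optimal $A'$, and hence equality.
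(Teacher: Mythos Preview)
Your plan is workable but takes a genuinely different route from the paper. You re-run the waste-minimization analysis from scratch, arguing that Lemma~\ref{lemma4.2} and the splitability observation make the monotonicity direction irrelevant, and you explicitly flag (and propose to discharge by an exchange/induction) the worry that back-loaded requirements might let a cleverer schedule shrink a later idle gap. The paper sidesteps all of this with a one-line symmetry reduction: reverse every non-decreasing chain to obtain a non-increasing instance, invoke Theorem~\ref{Theorem5.1} (already proven) to get an optimal LCMPF schedule there, and then observe that reading that schedule backwards in time yields a feasible schedule of the same length for the original instance whose slot contents coincide with what LCMPF produces on the non-decreasing input. What the paper's approach buys is economy --- no new wastage analysis is needed, only the observation that chain reversal and time reversal compose to a makespan-preserving bijection between feasible schedules of the two instances. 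What your approach buys is self-containment and slightly more insight into \emph{why} monotonicity direction does not matter for splitable tasks, at the cost of having to make rigorous exactly the step you identified as the main obstacle; the reversal trick lets you avoid that step entirely.
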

\begin{proof}
Non-decreasing chains of multiprocessor unit time tasks can be formed by reversing the order of non-increasing chain. Reversing the chain order does not increase the chain length and it can be scheduled optimally by LCMPF. Non-decreasing chains are just opposite chains of non-increasing chains. Let the schedule order $D$ we get is $a_1$, $a_2$, $a_3$ and $a_4$ of non-decreasing chains. If we reverse the chains and schedule them as non-increasing chains then the order of tasks for scheduling we will get the opposite of $D$, and that is $a_4$, $a_3$, $a_2$ and $a_1$. As schedule for non-increasing chains is optimal, we can say that schedule for non-decreasing is also optimal.
\end{proof}

\begin{theorem} \label{Theorem5.4}
LCF algorithm schedules non-decreasing chains of non-splitable multiprocessor unit time tasks optimally.
\end{theorem}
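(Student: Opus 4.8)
The plan is to replay, for the non-splitable algorithm LCF, the chain-reversal argument already used for LCMPF in Theorem \ref{Theorem5.3}. Given an instance $I$ of $N$ non-decreasing chains $C_i=(T_{i1},\dots,T_{i n_i})$ with $p_{i1}\le p_{i2}\le\cdots\le p_{i n_i}$, I would form the mirrored instance $I^R$ whose $i$-th chain is $C_i$ read backwards; each reversed chain is non-increasing, so $I^R$ is a legal instance for the non-increasing case. Chain lengths $n_i$, the multiset of processor requirements $\{p_{ij}\}$, and hence the lower bound $LB$ of equation (1) are identical for $I$ and $I^R$.

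First I would check that the time reflection $t\mapsto C_{\max}+1-t$ carries any feasible length-$C_{\max}$ schedule of $I$ to a feasible length-$C_{\max}$ schedule of $I^R$, and conversely: the per-slot processor-capacity bound is symmetric in time, unit processing time and non-splitability (each task receives all of its processors within a single slot) are preserved, and the chain precedence of $C_i$ turns into exactly the chain precedence of the reversed chain $C_i^R$ once the time axis is reflected. Being an involution, this map is a makespan-preserving bijection between feasible schedules of the two instances, so $\mathrm{OPT}(I)=\mathrm{OPT}(I^R)$.

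Next I would invoke the optimality of LCF on non-increasing non-splitable chains established earlier in this section (which rests on Theorem \ref{Theorem2} and Lemma \ref{lemma4.2}): LCF run on $I^R$ attains $\mathrm{OPT}(I^R)$. It remains only to show that LCF run on $I$ attains $\mathrm{OPT}(I)=\mathrm{OPT}(I^R)$. For this I would take the schedule $S$ that LCF produces on $I$, reflect it to the feasible schedule $S^R$ of $I^R$, and then apply to $S^R$ the sorting procedure underlying the non-increasing case: a sequence of makespan-preserving exchanges of tasks between adjacent time slots converts $S^R$ into the schedule LCF dictates on $I^R$, whose length is $\mathrm{OPT}(I^R)$. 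Hence $|S|=|S^R|=\mathrm{OPT}(I^R)=\mathrm{OPT}(I)$, and LCF is optimal on $I$.

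I expect the main obstacle to be precisely this last compatibility step, i.e.\ making rigorous the claim that LCF's greedy forward pass is compatible with time reversal. LCF's only decision is to pick a task from a currently longest remaining chain, with an arbitrary FCFS tie-break, so one must verify that the profile of remaining chain lengths produced along $S$, read in reverse, is a legal run of LCF on $I^R$, and that the FCFS tie-break cannot change $C_{\max}$ — the latter being already the content of Theorem \ref{Theorem2} for non-splitable tasks. Checking this correspondence, together with confirming that the adjacent-slot exchanges never invert a chain's own precedence in the non-decreasing setting, is where the genuine work lies; the reversal bijection and the invariance of $LB$ are routine bookkeeping.
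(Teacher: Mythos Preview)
Your proposal follows the same chain-reversal idea as the paper: reverse each non-decreasing chain to obtain a non-increasing instance, invoke the already-proved optimality of LCF there, and pull the result back. The paper's own proof is a two-sentence sketch that asserts exactly this and nothing more; it does not even mention the compatibility step you correctly flag as the real issue.

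One caution on how you plan to close that gap. The exchange-based route---reflect LCF's schedule $S$ on $I$ to $S^R$ on $I^R$, then sort $S^R$ into the schedule LCF produces on $I^R$ by makespan-preserving adjacent swaps---is circular as stated: length-preserving exchanges can carry $S^R$ to an optimal schedule only if $|S^R|$ is already optimal, which is what you are trying to prove. The alternative you also mention is the sound one: argue directly that the time-reflection of an LCF run on $I$ is itself a valid LCF run on $I^R$. Since LCF's only decision is ``pick a ready task from a currently longest remaining chain, FCFS otherwise,'' and remaining chain length at time $t$ in $I$ equals remaining chain length at time $C_{\max}+1-t$ in $I^R$, this symmetry holds; the FCFS tie-break is irrelevant to $C_{\max}$ by the argument of Theorem~\ref{Theorem2}. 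That yields $|S|=\mathrm{OPT}(I^R)=\mathrm{OPT}(I)$ without invoking exchanges at all.
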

\begin{proof}
As shown in Theorem \ref{Theorem5.3}, non-decreasing chain can be converted into non-increasing chains by reversing order. LCF gives optimal solution for non-increasing chains so we can say schedule by LCF for  non-decreasing chains of non-splitable multiprocessor unit time tasks will also be optimal. \ \\
\end{proof}

\section{Arbitrary Chains of multiprocessor unit time tasks} \label{arbichain} 

\subsection{Non-splitable tasks} \label{arbnonsplit}
In this we are considering system of application of non-splitable multiprocessor unit time tasks with arbitrary number of processor requirement in each phase of a chain. Unlike the processor requirement of  multiprocessor task of uniform chains or monotone chains where  are same or in non-increasing (or non-decreasing) order. Figure \ref{figarbitrary} shows an example of arbitrary chains of multiprocessor unit time  tasks. As the tasks are multiprocessor unit time tasks and non-splitable in terms of processor, this problem is proved to be a hard problem \cite{Blazewicz13}. 

We have used three heuristics for this problem and compared the performance. The proposed heuristics are (a) maximum criticality first (MCF), (b) longest chain maximum criticality first (LCMCF), and (c) longest chain maximum processor first (LCMPF). Longest chain maximum processor first heuristic is described in Section \ref{secunipitable}. 

As defined in \cite{10}, we level each multiprocessor task of chain with its criticality value, which is sum of processor requirement of self and all of its successors. The criticality of task $T_{ij}$ is calculated as 
\begin{equation}  \label{criticequ}
CV_{ij} = \sum_{k=j}^{k=n_i} p_{ik}
\end{equation}
where $CV_{ij}$ is the criticality of $j^{th}$ task of $i^{th}$ chain, $p_{ik}$ is processor requirement of task $T_{ik}$ and $n_i$ is number of phases or tasks of $i^{th}$ chain. Figure \ref{figure25} shows calculated criticality value of tasks of a example chain and now every task have one more parameter that is $CV_{ij}$.
\begin{figure}[tb]
\centering
\includegraphics[scale=0.43]{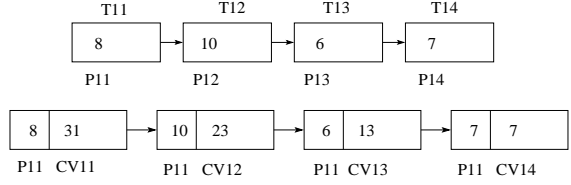} 
\caption{\textbf{Criticality of tasks of chain}}
\label{figure25}
\end{figure} 

 Pseudocode for MCF heuristic is shown in Algorithm \ref{algo3}. Initially, it calculates the criticality value (using equation 5) of tasks of all applications. In every scheduling step, it try to select subset of tasks from set of ready tasks to schedule that gives the maximum criticality value with total number of processors in the subset of ready tasks is less then $M$. If ready task of $i^{th}$ chain is $T_{ir}$ and its criticality value is $CV_{ir}$ and processor requirement is $p_{ir}$ then we need to select a subset $S$ from the ready tasks so that the following objective meets 
\begin{equation}
max(\sum_{T_{ir}\in S} CV_{ir}) \: \: \: with \:\:\:\sum_{T_{ir}\in S} p_{ir} \leq M 
\end{equation} 
This is similar to solving 0-1 knapsack problem.

\begin{algorithm}[tb]
\footnotesize
\textbf{Input:} Set of $N$ Application or Chains and $M$ Processors.  
\\\textbf{Output:} Schedule for application's tasks.
\begin{algorithmic}[1]
\STATE Calculate criticality for each task of each application.
\STATE Initialize ready queue with the first task of all applications
\WHILE {All chains are not scheduled or ready queue is not empty}
\STATE Select subset from ready tasks that gives maximum criticality and which has total processor requirement less than or equal to M.
\STATE Update ready queue with the ready tasks(i.e. whose previous tasks have been scheduled).
\ENDWHILE
\end{algorithmic}
\caption{\textbf{Maximum criticality first scheduling (MCF)}}
\label{algo3}
\end{algorithm}

Pseudo code for LCMCF is shown in Algorithm \ref{algo4}. It is similar to LCMPF except the difference between second criteria of selection in choosing ready tasks. LCMPF chooses the ready task with maximum processor while LCMCF chooses the task with maximum criticality.

\begin{algorithm}[tb]
\footnotesize
\textbf{Input:} Set of $N$ Application or Chains and $M$ Processors. 
\\\textbf{Output:} Schedule for application's tasks. 
\begin{algorithmic}[1]
\STATE Calculate criticality for each task of each application.
\WHILE {All chains are not scheduled}
\STATE Sort chains in the decreasing order of remaining unscheduled chain length.
\WHILE{Processors are free and at least one chain is unvisited}
\IF {There are two or more unvisited chains of same length}
    \STATE Select ready task $T_{ij}$ which has maximum criticality among all ready tasks of unvisited longest chains . 
    \IF{There are two or more tasks of same criticality}
     	\STATE Select any task.
     	\ENDIF
\ELSE
 \STATE Select ready task $T_{ij}$ from longest and unvisited chain. 
\ENDIF       
\IF {Remaining processors $m$ are more than $p_{ij}$}
    \STATE Schedule task $T_{ij}$ on allocated Processors
\ELSE \STATE Mark this chain as visited	
\ENDIF 
\ENDWHILE
\STATE Mark all chains as unvisited
\ENDWHILE
\end{algorithmic}
\caption{\textbf{Longest Chain Maximum Criticality First (LCMCF)}}
\label{algo4}
\end{algorithm}

Figure  \ref{figure30} shows the set of 5 arbitrary applications with non-splitable multiprocessor unit time tasks. Second parameter in each task of each chain is the criticality of that task calculated by equation \ref{criticequ}. Assuming 20 processors, lower bound (LB) for the example task system  will be $\simeq 8$. Scheduled outputs of all three proposed heuristics for set of applications (Figure \ref{figure30}) are shown in Figure \ref{tab2} in tabular form. For this set of applications, all three heuristics give same makespan time $C_{max} = 9$ and total CPU time wastage equals to 21 (and $p_{waste} = 21/20 = 1.05 \simeq 1$) but different set of tasks are scheduled in a time slot.

\begin{figure}[tb]
\centering
\includegraphics[scale=0.38]{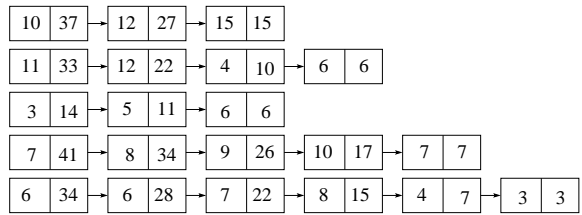} 
\caption{\textbf{Chain of non-splitable multiprocessor tasks with criticality values}}
\label{figure30}
\end{figure}   

\begin{figure}[tb]
\centering
\setlength{\tabcolsep}{1.1mm}
 \begin{tabular}{|c|l|l|l|}
  \hline
  Time & MCF output & LCMCF output & LCMPF output\\ \hline
  At time t = 0 & $T_{11}$, $T_{31}$, $T_{41}$ & $T_{31}$, $T_{41}$, $T_{51}$ & $T_{31}$, $T_{41}$, $T_{51}$ \\
  At time t = 1 & $T_{32}$, $T_{42}$, $T_{51}$ & $T_{32}$, $T_{42}$, $T_{52}$ & $T_{21}$, $T_{52}$ \\
  At time t = 2 & $T_{21}$, $T_{52}$ & $T_{21}$, $T_{53}$ & 
$T_{32}$, $T_{42}$, $T_{53}$ \\
  At time t = 3 & $T_{12}$, $T_{53}$ & $T_{11}$, $T_{43}$ &
$T_{22}$, $T_{54}$\\
  At time t = 4 & $T_{43}$, $T_{54}$ & $T_{22}$, $T_{54}$ &
$T_{32}$, $T_{42}$, $T_{52}$ \\
  At time t = 5 & $T_{33}$, $T_{44}$, $T_{55}$ & $T_{12}$, $T_{23}$, $T_{55}$ &
$T_{11}$, $T_{43}$ \\
  At time t = 6 & $T_{22}$, $T_{45}$ & $T_{44}$, $T_{24}$, $T_{56}$ & 
$T_{12}$, $T_{23}$, $T_{55}$  \\
  At time t = 7 & $T_{13}$, $T_{23}$ & $T_{13}$ &
$T_{24}$, $T_{44}$, $T_{56}$  \\
  At time t = 8 & $T_{24}$, $T_{56}$ & $T_{33}$, $T_{45}$ & 
$T_{23}$, $T_{45}$\\
  \hline 
 \end{tabular}
 \label{tab2}
 \caption{Scheduling results of all three heuristics of task system shown in Figure \ref{figure30} on 20 processors}
\end{figure}

\begin{figure}[tb]%
    \centering
    \subfigure[Cmax $L_{MCF}$=9, $L_{LCMCF}$=8 and $L_{LCMPF}$=8)]{{\label{arbit1}\includegraphics[scale =0.47]{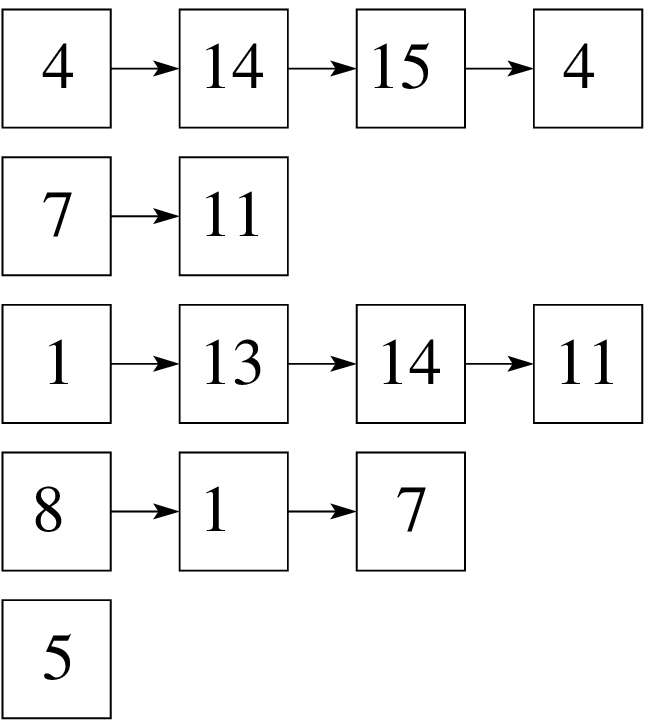} }}%
\hspace{2mm}
    \subfigure[Cmax $L_{MCF}$=9, $L_{LCMCF}$=10 and $L_{LCMPF}$=9]{{\label{arbit2}\includegraphics[scale = 0.47]{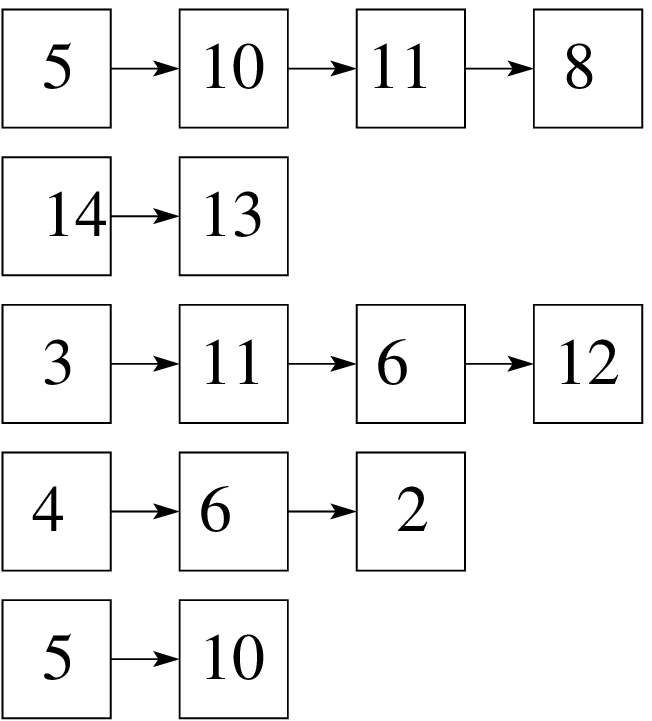} }}%
\hspace{2mm}    
    \subfigure[Cmax $L_{MCF}$=7, $L_{LCMCF}$=7 and $L_{LCMPF}$=8]{{\label{arbit3}\includegraphics[scale = 0.47]{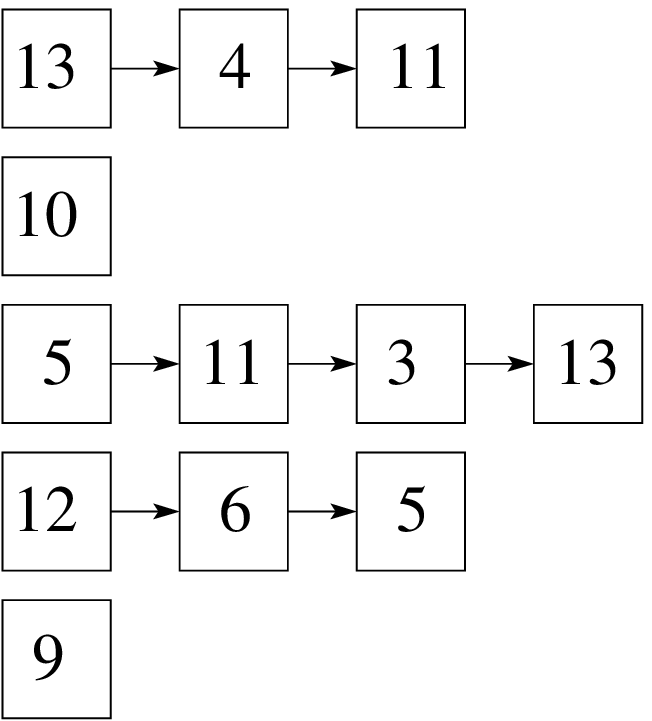} }}%
    \caption{Performance of MCF, LCMCF and LCMPF on contradictory examples}
    \label{fig:arbitrary}%
\end{figure}

\begin{figure}[tb]%
    \centering
    \subfigure[M = 64, Apps = 100]{{\label{a}\includegraphics[scale =0.28, angle = -90]{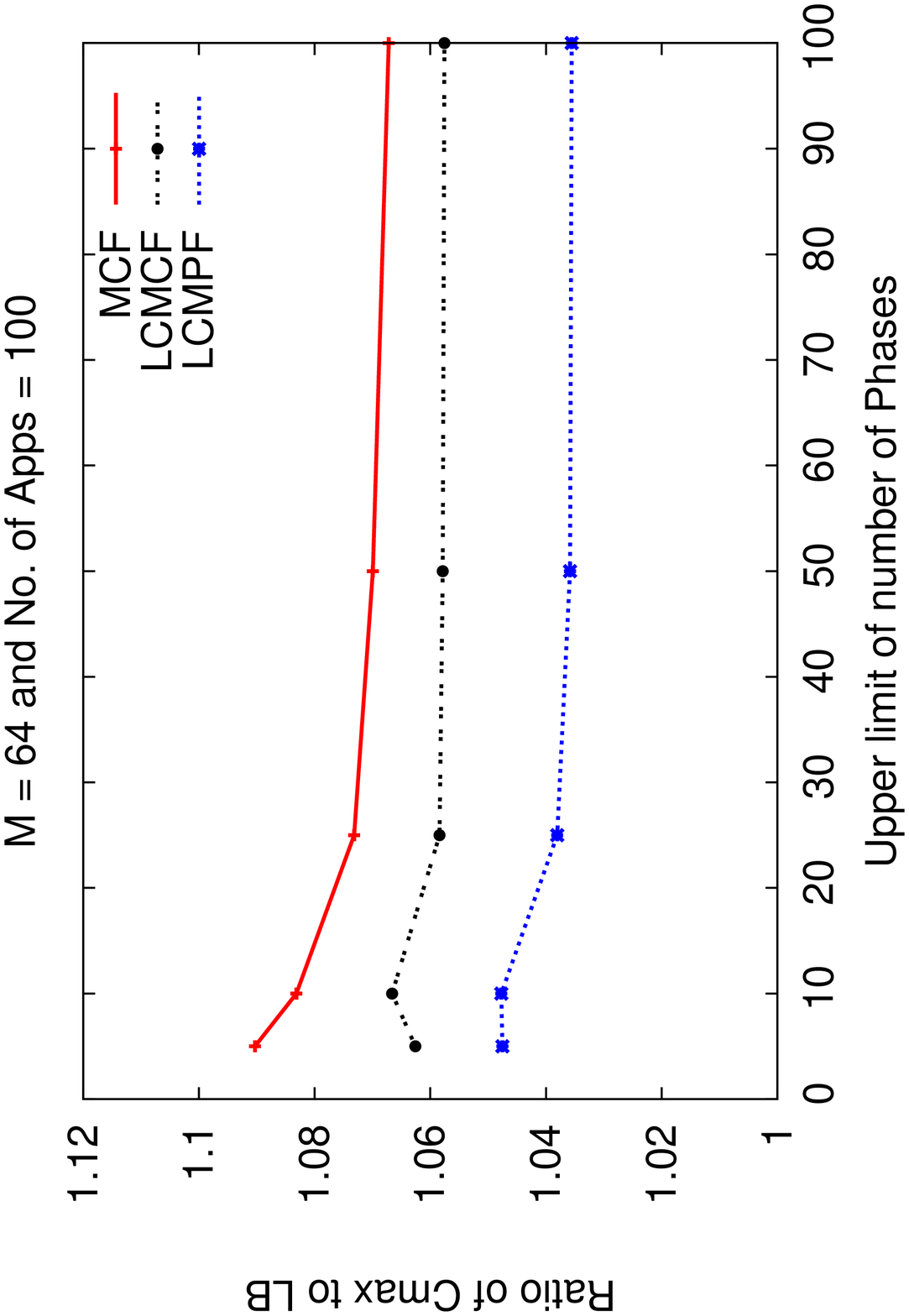} }}%
    \subfigure[M = 64, Apps = 1000]{{\label{c}\includegraphics[scale = 0.28, angle = -90]{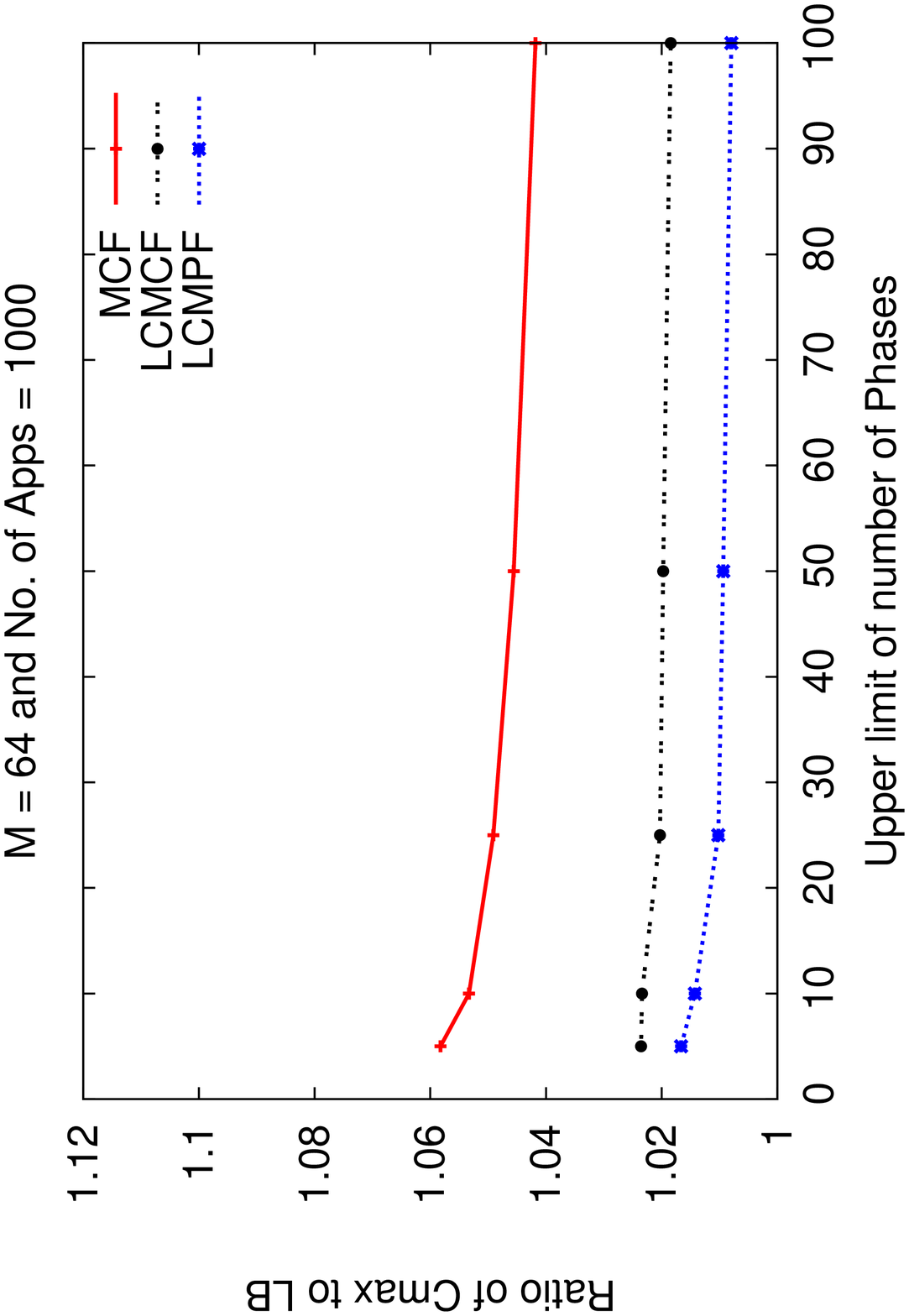} }}%
    \qquad
    \centering
    \subfigure[M = 512, Apps = 100]{{\label{d}\includegraphics[scale = 0.28, angle = -90]{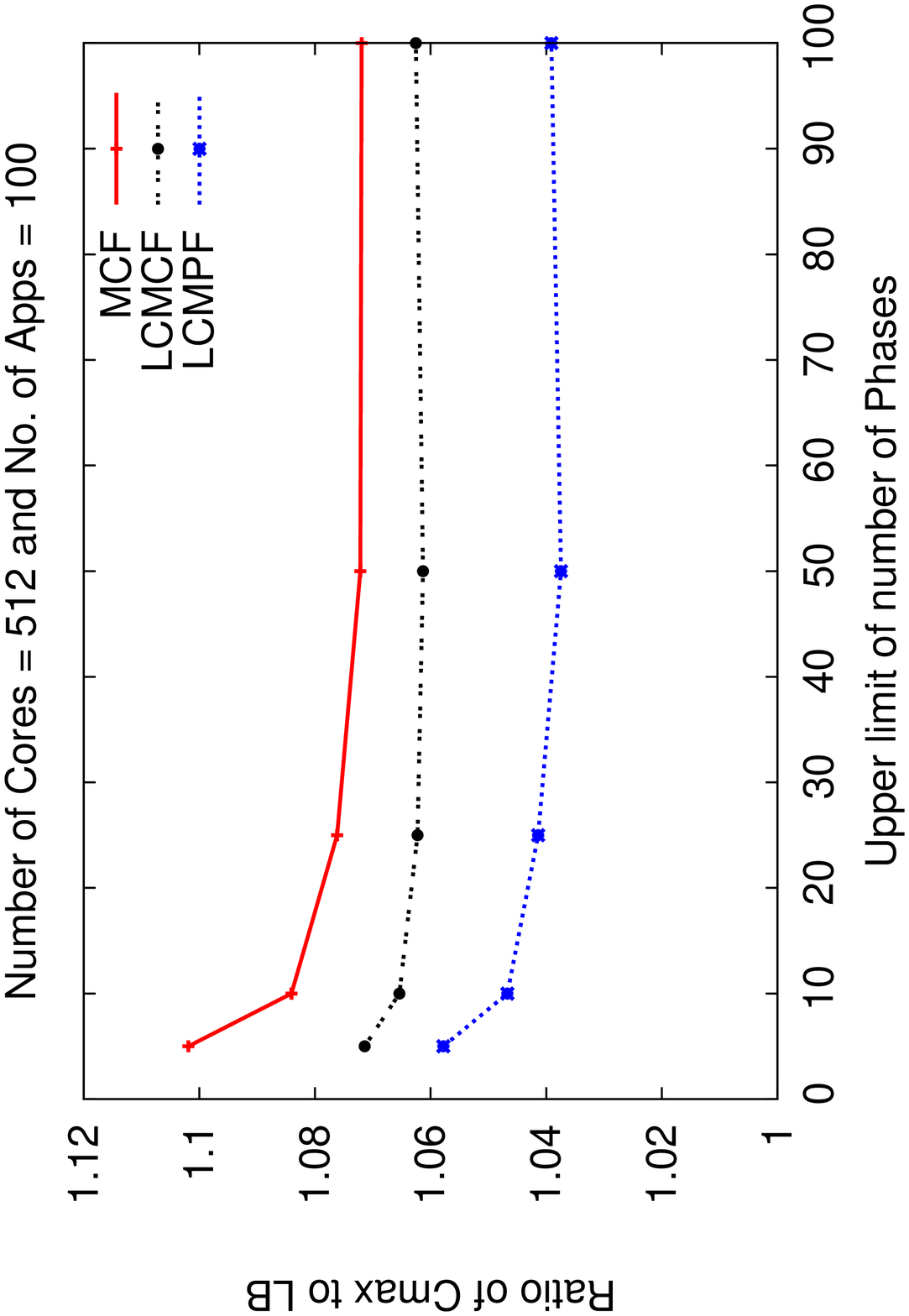} }}%
    \subfigure[M = 512, Apps = 1000]{{\label{f}\includegraphics[scale = 0.28, angle = -90]{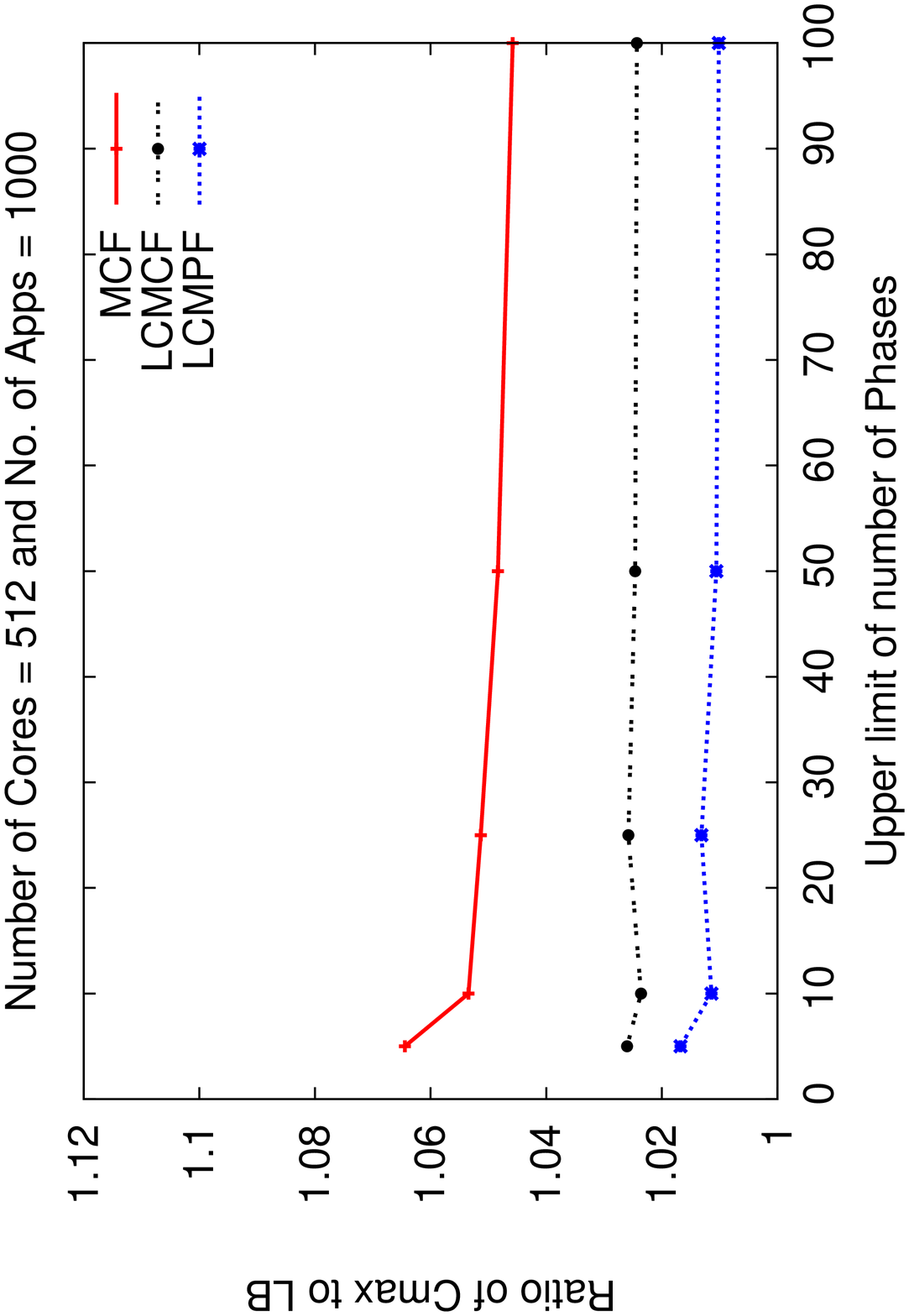} }}%
    \caption{\textbf{Comparison between MCF, LCMCF and LCMPF}}%
    \label{fig:example1}%
\end{figure}

\begin{figure}[tb]%
    \centering
    \subfigure[Phase variation = $\pm 10\%$]{{
    \label{g}\includegraphics[scale = 0.28, angle = -90]{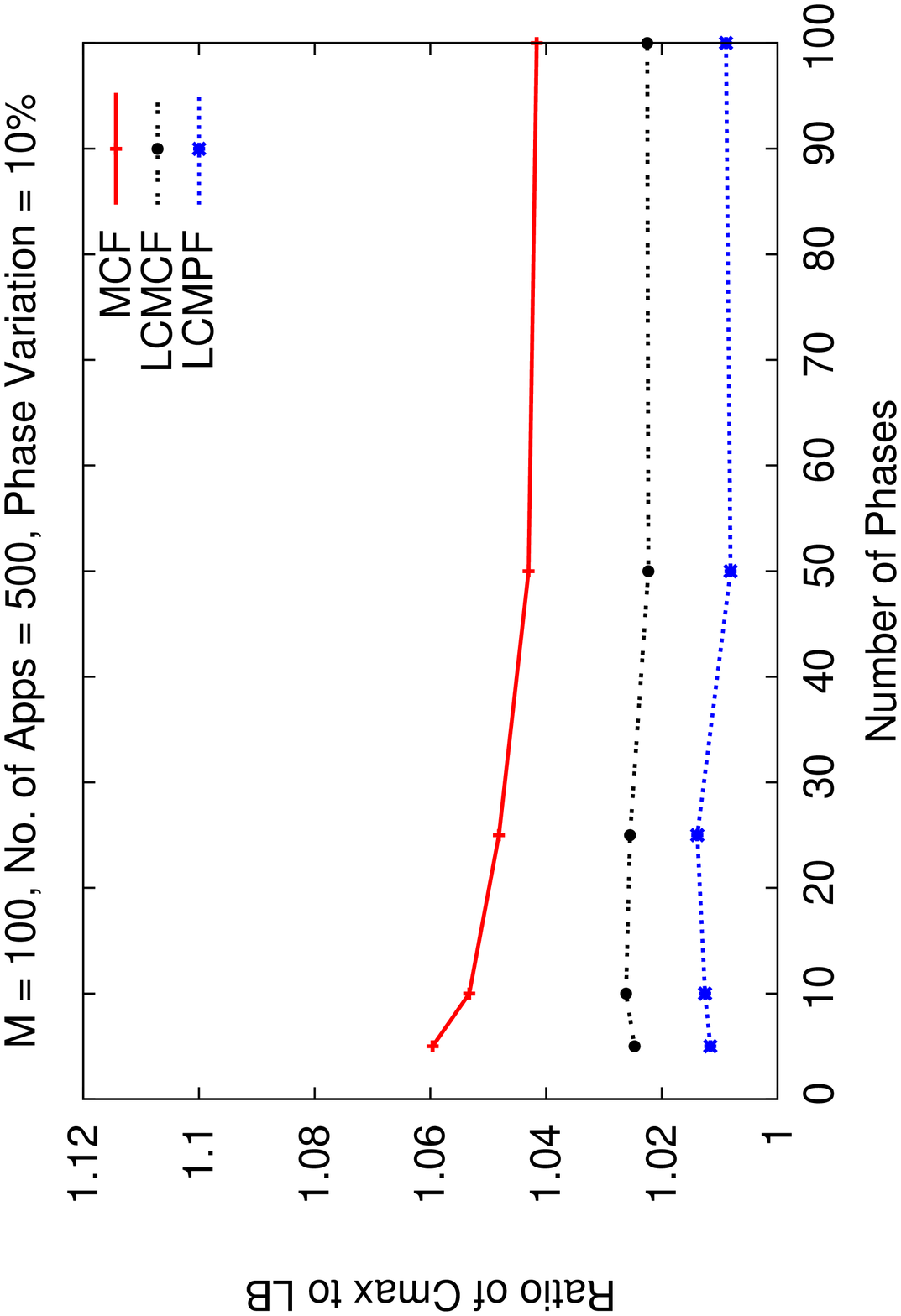} }}%
    \subfigure[Phase variation = $\pm 50\%$]{{
    \label{i}\includegraphics[scale = 0.28, angle = -90]{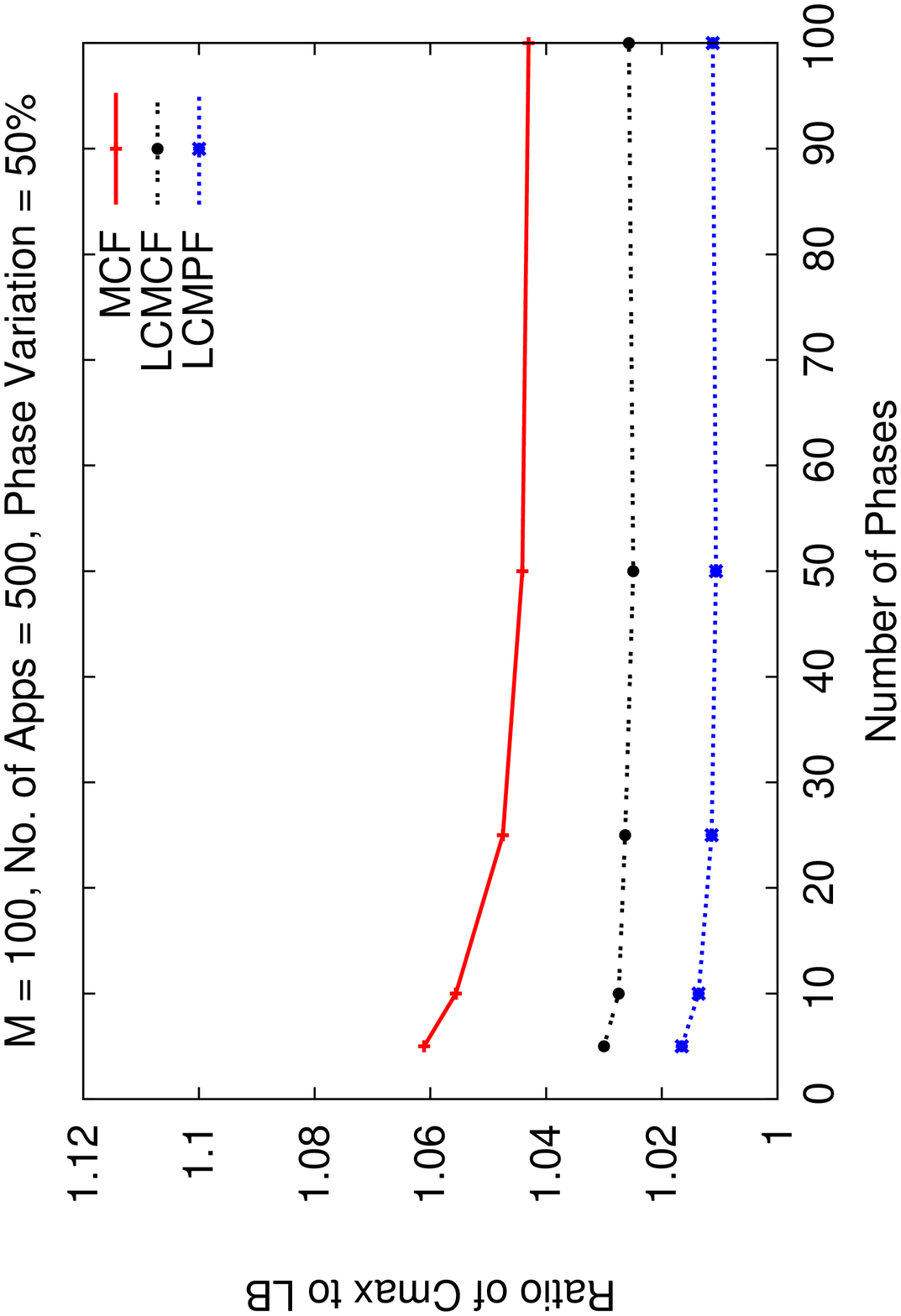} }}%
    \caption{\textbf{Comparison between MCF, LCMCF and LCMPF with variation of phases in different applications}}%
    \label{fig:example2}%
\end{figure}

Among these three heuristics, no one gives best result for all the case. One heuristic may perform better as compared to other two in some specific case (lower the $C_{max}$ value, better the result). As shown in Figure \ref{fig:arbitrary}, performance of LCMCF and LCMPF is better compare to MCF for application set 1 (shown in Figure \ref{fig:arbitrary}(a)), performance of MCF and LCMPF is better than LCMCF for applications set 2 (shown in figure \ref{fig:arbitrary}(b)) and performance of MCF and LCMCF is better than LCMPF for application set 3 (shown in Figure \ref{fig:arbitrary}(c)).

Figures \ref{fig:example1}a and \ref{fig:example1}b show the performance of MCF, LCMCF and LCMPF scheduling of 100 and 1000 applications respectively with different number of phases on 64 processor system.   Figures \ref{fig:example1}c and \ref{fig:example1}d show the performance of same scheduling approaches of same application set  on 512 processors.  We  observed that:
\begin{enumerate}
\item For a fixed number of processors, when number of applications increase, the difference between LB and makespan time decreases i.e. efficiency in time increases.
\item For fixed number of processors and applications, when upper limit of number of phases an application can have increase efficiency in time increases for all three heuristics.
\item Avgerage ratio of $C_{max}$ and $LB$ will always be better in LCMPF.       
\end{enumerate}

Again Figure \ref{fig:example2} shows the performance of MCF, LCMCF and LCMPF scheduling of application  system with  10\% and 50\% variation of number of phases of applications with one another for 500 applications on 100 processors. Our observation  says  that
\begin{enumerate}
\item For a fixed number of processors and applications, when phase variation, the difference between LB and makespan time varies i.e. efficiency in time increases or decreases.
\item For fixed number of processors, applications and phase variation, when number of phases an application can have increase efficiency in time increases for all three heuristics.
\end{enumerate}

Time complexity of MCF will be the complexity of 0-1 knapsack solution at each time slot. As problem of 0-1 knapsack can be solved in pseudo-polynomial time with respect to capacity of knapsack and total number of item, so in our case it will be $O(NM)$, so total complexity of MCF is $C_{max}.NM$. As described in Section 4, complexity of LCMCF will be same as complexity of LCMPF. 

Overall LCMPF will give better results than MCF and LCMCF for any set of applications having any number of phases and any number of applications.

\subsection{Splitable tasks}

Scheduling of arbitrary chains of splitable multiprocessor unit time tasks is an interesting problem. As of our knowledge, no one have found polynomial time solution and also no one has proved that this problem is NP-Complete. Using processor as continuous medium which behaves like electrical charge passing from task to task in the DAG (instead of chain), author of paper \cite{26}, solve this in iterative ways with complexity is $O(e^2+ne+I(n+e))$, where $e$ is the number of edges in the precedence graph and $I$  is the number of iterations in the algorithm. They use optimality conditions impose by a set of nonlinear equations on the  flow of processing power (processors) and on the completion times of independent paths of execution which is analogous to Kirchhoff's laws of electrical circuit theory. But our main aim is to solve in using discrete approach. 

We have also used the same MCF, LCMCF and LCMPF heuristics to schedule this kind of application on to multicore. As multiprocessor task are splitable, we use fractional knapsack in MCF heuristic. Experiment shows all three heuristics produce exactly same result for randomly generated examples. We also observe that all three heuristics perform equally if sum of processor requirement  of all the ready multiprocessor tasks is $ \geq M$ in all schedule time slot except  the last time slot. If this condition is violated then LCMPF performed better than other two heuristics.

\section{Conclusion and Future work} \label{concl}

Scheduling with considering the phase behavior  improve the system performance. Our proposed approach LCMPF scheduling of uniform and monotone chain of multiprocessor unit time task is proved to optimal. If the multiprocessor task are non-splitable, then LCF approach is optimal, we don't need to consider processor occupancy criteria of multiprocessor task.  

Scheduling arbitrary chain of multiprocessor unit time task is in NP-complete. In this case our proposed LCMPF based heuristics perform better as compared to MCF and LCMCF heuristics. We believe that scheduling of arbitrary chain of splitable multiprocessor unit time task is still an open problem. We have also compared  performance of proposed LCMPF and other MCF and LCMCF heuristics for scheduling this kind of task. In future, we are planning to try to solve scheduling of arbitrary chain of splitable multiprocessor unit time task. Also solve the same with other restrictive precedence constraints and or with some communication model.

\bibliographystyle{IEEEtran}

\begin{thebibliography}{1} 
\bibitem{Sherwood20}
T. Sherwood, S. Sair, and B. Calder, \textit{Phase tracking and prediction}, ACM SIGARCH Comp. Arch. News, Vol. 31, No. 2, 2003, pp. 336–349.

\bibitem{Cho21}
C. B. Cho and T. Li, \textit{Complexity-based program phase analysis and classification}, in Proc. of PACT, 2006, pp. 105–113.

\bibitem{Ban22}
S. Banerjee, G. Surendra, and S. K. Nandy, \textit{On the effectiveness of phase based regression models to trade power and performance using dynamic processor adaptation}, J. Syst. Archit., Vol. 54, No. 8, 2008, pp. 797–815.

\bibitem{Blazewicz13}
J. Blazewicz, Zhen liu, \textit{Scheduling multiprocessor tasks with chain constraint}, Euro. J of Op. Res., Final version, 1996.


\bibitem{SureshIndicon}
Amrutlal D Sueresh  and Sahu, A., \textit{Scheduling of multi-phase applications on to mesh multicore architecture}, IEEE Indicon, Dec 2014.

\bibitem{1}
Jianzhong Du and Joseph Y-T. Leung, \textit{Complexity of Scheduling Parallel Task Systems},
SIAM J. on Dis. Math.., Vol. 2, No. 4, 1989, pp. 473-487.
\bibitem{2}
J. Blazewicz, J. Weglarz and M. Drabowski, \textit{Scheduling independent 2-processor tasks to minimize schedule length}, Info. Proc. Lett., Vol. 18, No. 4, 1984, pp. 267-273
\bibitem{3}
J. Błaz̀ewicz, M. Drozdowski, G. Schmidt and D. De Werra, \textit{Scheduling independent 2-processor tasks on a uniform duo-processor system}, Dis. App. Math., Vol. 28, No. 1, 1990, pp. 11-20.
\bibitem{4}
Teofilo F. Gonzalez and Donald B. Johnson, \textit{A New Algorithm for Preemptive Scheduling of Trees, J. of ACM}, Vol. 27, No. 2, 1980, pp. 287-312.
\bibitem{5}
J. Blazewicz, J. Weglarz and M. Drabowski, \textit{Scheduling Multiprocessor Tasks to Minimize Schedule Length},  IEEE Tran. on Comp., Vol. C-35, No. 5, 1986.
\bibitem{6}
C. H. Papadimitriou and M. Yannakakis, \textit{Scheduling Interval-Ordered Tasks}, SIAM J. Comput., Vol. 8, No. 3, 1987, pp.405-409.
\bibitem{7}
Yu-Kwong Kwokand and Ishfaq Ahmad, \textit{Static Scheduling Algorithms for Allocating Directed Task Graphs to Multiprocessors}, ACM Comp. Surveys (CSUR), Vol. 31, No. 4, 1999, pp. 406-471.
\bibitem{8}
Qinghua Li, Youlin Ruan, ShidaYang and Tingyao Jiang, \textit{An optimal scheduling algorithm for fork-join task graphs}, Par. and Dist. Comp., Appl. and Tech., 2003.
\bibitem{9}
E.G. Coffman and R.L. Graham, \textit{Optimal Scheduling for Two-Processor Systems}, Acta
Informatica, Vol. 1, 1972, pp. 200-213.
\bibitem{10}
T.C. Hu, \textit{Parallel Sequencing and Assembly Line Problems}, Oper. Res., Vol. 19, No. 6, Nov. 1961, pp. 841-848.
\bibitem{11}
M.R. Garey and D.S. Johnson, \textit{Computers and Intractability; A Guide to the Theory of NP-Completeness}, W.H. Freeman and Company, 1979.
\bibitem{12}
J. Ullman, \textit{NP-Complete Scheduling Problems}, J. of Comp. and Sys. Sc., Vol. 10, 1975, pp. 384-393.
\bibitem{14}
R. Sethi, \textit{Scheduling Graphs on Two Processors}, SIAM J. on Computing, Vol. 5, No. 1, Mar. 1976, pp. 73-82.
\bibitem{15}
J. Blazewicz, et al, \textit{Scheduling independent multiprocessor tasks on a uniform k-processor system}, Par. Comp., Vol. 20, 1994, pp. 15-28.
\bibitem{16}
J. Blazewicz and Zhen Liu, \textit{Linear and quadratic algorithm for scheduling chains and opposite chains}, Euro. J.  of Ope. Res., Vol. 137, No. 2, 2002.
\bibitem{17}
J. Blazewicz and Daniel Kobler, \textit{Review of properties of different precedence graphs for scheduling problems}, Euro. J.  of Ope. Res., Vol. 142, 2002.
\bibitem{18}
Teofilo Gonzalez and Sartaj Sahni, \textit{Flowshop and Jobshop Schedules: Complexity and Operation}, Op. Res., Vol. 26, No. 1, Jan-Feb. 1978.
\bibitem{19}
Teofilo F. Gonzalez and Donald B. Johnson, \textit{A New Algorithm for Preemptive Scheduling of Trees}, J. ACM, Vol. 27, 1980, pp. 287-312.
\bibitem{23}
Peter Brucker, \textit{Scheduling Algorithms}, Fifth Edition, Springer, 2006.

\bibitem{24}
Giorgio C. Buttazzo, \textit{Hard Real Time Computing Systems: Predictable Scheduling Algorithms and Applications}, Third Edition, Springer, 2011. 

\bibitem{25}
R. McNaughton, \textit{Scheduling with deadline and loss functions}, Management Science, 6:1-12, 1959

\bibitem{26}
G.N.Srinivasa Prasanna, B.R.Musicus, \textit{Generalized multiprocessor scheduling for directed acyclic graphs}, Proc. of IEEE Supercomputing, 1994, 237-246.

\end{thebibliography}


\end{document}